  \tikzstyle{every picture}=[
\newsavebox\dotbox
\sbox{\dotbox}{\(\displaystyle\bigodot\)}
\DeclareMathOperator*{\bigcdot}{\raisebox{0pt}[\ht\dotbox][\dp\dotbox]{\(\bullet\)}}
\newcommand*{\DashedArrow}[1][]{\mathbin{\tikz [baseline=-0.25ex,-latex, dashed,#1] \draw [#1] (0pt,0.5ex) -- (1.3em,0.5ex);}}%
\newcommand*{\Arrow}[1][]{\mathbin{\tikz [baseline=-0.25ex,-latex, #1] \draw [#1] (0pt,0.5ex) -- (1.3em,0.5ex);}}%
\newtheorem{definition}{Definition}
\newtheorem{theorem}{Theorem}
\newtheorem{lemma}{Lemma}
\newtheorem{proposition}{Proposition}
\newtheorem{corollary}{Corollary}
\newdefinition{example}{Example}
\newdefinition{question}{Question}
\newcommand\modif[2]{%
   {#2}%
}
\begin{document} 

\begin{frontmatter}

  \title{(k,l)-Unambiguity and Quasi-Deterministic Structures}
  
  \author[rouen]{Pascal Caron}
     \ead{pascal.caron@univ-rouen.fr}
  \author[lehavre]{Marianne Flouret}
    \ead{marianne.flouret@univ-lehavre.fr}
  \author[depart]{Ludovic Mignot} 
     \ead{ludovic.mignot@univ-rouen.fr}

  \address[rouen]{LITIS, Universit\'e de Rouen, 76801 Saint-\'Etienne du Rouvray Cedex, France}

  \address[lehavre]{LITIS, Universit\'e du Havre, 76058 Le Havre Cedex, France}
    \address[depart]{D\'epartement d'informatique, Universit\'e de Rouen, 76801 Saint-\'Etienne du Rouvray Cedex, France}

  \begin{abstract} 
    We focus on the family of $(k,l)$-unambi\-guous automata that encompasses the one of deterministic $k$-lookahead automata introduced by Han and Wood.
    We show that this family presents nice theoretical properties that allow us to compute quasi-deterministic structures. 
    These structures are smaller than DFAs and can be used to solve the membership problem faster than NFAs.    
  \end{abstract}
  
	\begin{keyword}
	  Automata theory \sep Deterministic automata \sep $k$-lookahead determinism \sep Unambiguity 
	  \MSC[2010] 68Q45  
	\end{keyword}
\end{frontmatter}


\section{Introduction}\label{se:int}
  \modif{This paper is an extended version of~\cite{CFM14}.}{}

One of the \modif{best known}{most popular} automata construction is the position automaton construction \cite{Glu61}. If a regular expression has $n$ occurrences of symbols, then the corresponding position automaton, which is not necessarily deterministic, has exactly $n+1$ states.
  The $1$-unambiguous  regular languages have been defined by Br\"uggemann-Klein and Wood~\cite{BW98} as  languages denoted  by  regular expressions  the position automata of which are deterministic. They have also shown that there exist regular languages that are not $1$-unambiguous. This property has practical implication, since it models a property needed in XML DTDs~\cite{BPS06}. Indeed, XML DTDs are defined as an extension of classical context-free grammars in which the right hand side of any production is a one-unambiguous regular expression. 
   Consequently, \modif{}{a} characterization of such languages, \modif{that}{which} has been considered \emph{via} the deterministic minimal automaton, is very important, since it proves that not all the regular languages can be used in XML DTDs.  
  The computation of a small deterministic recognizer is also technically important since it allows a reduction of the time and of the space needed to solve the membership problem (to determine whether or not a given word belongs to a language).
   As a consequence, one may wonder whether there exists a family of languages encompassing the $1$-unambiguous one that can be recognized by a polynomial-size deterministic family of recognizers.
  
  On the one hand, numerous extensions of $1$-unambiguity have been considered, like $k$-block determinism~\cite{GMW01}, $k$-lookahead determinism~\cite{HW08} or weak $1$-unambiguity~\cite{CHM11}. All of these extensions, likely to the notion of $1$-unambiguity, are expression-based properties. A regular language is $1$-unambiguous (resp. $k$-block deterministic, $k$-lookahead deterministic, weakly $1$-unambi\-guous) if it is denoted by a $1$-unambiguous (resp. $k$-block deterministic, $k$-lookahead deterministic, weakly $1$-unambi\-guous) regular expression. All of these three properties are defined through a recognizer construction.
  
  On the other hand, the concept of lookahead delegation, introduced in \cite{DIS04}, handles determinism without computing a deterministic recognizer; the determinism is simulated by a fixed number of input symbols read ahead, in order to select the \emph{right} transition in the NFA. This concept arose in a formal study of web-services composition and its practical applications~\cite{GHIS04}. Questions about complexity and decidability of lookahead delegation have been answered by Ravikumar and Santean in~\cite{RS07}. Finally, having defined predictable semiautomata, Brzozowski and Santean~\cite{BS09} improved complexity of determining whether an automaton admits a lookahead delegator.
  
  The notion of $(k,l)$-unambiguity for automata is the first step of the study of the $(k,l)$-unambiguity for languages. In this paper,  
   we define the notion of $(k,l)$-unambiguity for automata, leading to the computation of quasi-deterministic structures, that are smaller than DFAs and that can be used to solve the membership problem faster than NFAs.    
  These structures act as automata for which a window of size $k$ and some shifting states are added. Recognizing a word on such a structure is performed as follows: At the beginning of the process, the window matches the $k$ first letters of the input word. When a shifting state is reached and the input word is not entirely read, the window is slided along the input word ($j<k$ letters, depending on the shifting state), the \modif{QDS}{Quasi-Deterministic Structures (QDS)} returns in a regular state and the reading restarts at the beginning of the window.
  We then show, thanks to an equivalence relation, how to reduce such structures. We also exhibit a family of languages for which reduced QDS are exponentially smaller than minimal DFAs.
  Next step is to study the $(k,l)$-unambiguous languages, that are languages denoted by some  regular expressions the position automaton of which is $(k,l)$-unambiguous. Having such a regular expression allows us to directly compute a quasi-deterministic structure to solve the membership problem.
  
  In Section~\ref{se:klna}, after defining the $(k,l)$-unambiguity as an extension of $k$-lookahead determinism, we characterize this notion making use of the \emph{square automaton}. 
  In Section~\ref{se:qds}, we define quasi-deterministic structures that allow us to perform a constant space membership test. Section~\ref{se:klnanfa2qds} is devoted to the computation of the quasi-deterministic structure associated with a $(k,l)$-unambiguous automaton. The notion of quotient of a quasi-deterministic structure is defined in Section~\ref{sec qds}, and a right invariant equivalence relation is investigated. It is shown in Section~\ref{sec qds vs dfa} that reduced quasi-deterministic structures can be exponentially smaller than minimal \modif{deteministic}{deterministic} automata.
  
  \modif{}{This paper is an extended version of~\cite{CFM14}.}

\section{Preliminaries}\label{se:pre}

Let $\varepsilon$ be \emph{the empty word}. An \emph{alphabet} $\Sigma$ is a finite set of distinct symbols. The usual concatenation of symbols is denoted by $\cdot$, and $\varepsilon$ is its identity element. We denote by $\Sigma^*$ the smallest set containing $\Sigma\cup\{\varepsilon\}$ and closed under  the $\cdot$ operation. Any subset of $\Sigma^*$ is called a \emph{language over} $\Sigma$. Any element of $\Sigma^*$ is called a \emph{word}. The length of a word $w$, noted $|w|$, is the number of symbols in $\Sigma$ \modif{it is the concatenation of}{occurring in $w$}  (\emph{e.g.} $|\varepsilon|=0$). By extension the number of elements of a set $S$ is denoted by $|S|$.
For a given integer $k$, we denote by $\Sigma^k$ the set of words of length $k$ and by $\Sigma^{\leq k}$ the set $\bigcup_{k'\leq k} \Sigma^{k'}$. 
 Let $w=a_1\cdots a_{|w|}$ be a word in $\Sigma^*$ such that for any $k$ in $[1,|w|]$, $a_k$ is a symbol in $\Sigma$. 
Let $i$ and $j$ be two integers such that $i\leq j\leq |w|$. We denote by $w[i,j]$ the subword $a_i\cdots a_j$ of $w$ starting at position $i$ and ending at the position $j$ and by $w[i]$ the $i$-th symbol $a_i$ of $w$.
More generally, we will define by $\bigcdot_{j=i}^ka_j$ the word $a_i\cdots a_k$.  In case $i>k$, this word is $\varepsilon$.
%
%
%
%

A \emph{nondeterministic finite automaton} (\textbf{NFA}) $A$ is a $5$-tuple $(\Sigma,Q,I,F,\delta)$ where $\Sigma$ is an alphabet, $Q$ is a \emph{set of states}, $I\subset Q$ is a \emph{set of initial states}, $F\subset Q$ is a \emph{set of final states} and $\delta$ is a \emph{transition function} defined from $Q\times\Sigma$ to $2^{Q}$. The function $\delta$ can be interpreted as a subset of $Q\times\Sigma\times Q$ defined by $q'\in\delta(q,a)$ $\Leftrightarrow$ $(q,a,q')\in\delta$.
 The domain of $\delta$ is extended to $2^Q\times\Sigma^*$ as follows: for any symbol $a$ in $\Sigma$, for any state $q$ in $Q$, for any subset $P$ of $Q$, for any word $w$ in $\Sigma^*$: $\delta(P,\varepsilon)=P$, $\delta(P,a)=\bigcup_{p\in P}\delta(p,a)$, $\delta(P,aw)=\delta(\delta(P,a),w)$. 
  Let  $k$ be an integer and $w$ be a word in $\Sigma^k$. A \emph{path} $p$ \emph{labelled by} $w$ is a finite sequence $t=(p_0,\ldots, p_k)$\modif{}{ of states} such that for any integer $0\leq j<k$, $p_{j+1}\in \delta(p_j,w[j+1])$. The path $t$ \emph{starts} with $p_0$. Two paths $t=(p_0,\ldots, p_k)$ and $t'=(p'_0,\ldots, p'_k)$ labelled by $w$ are \emph{totally distinct} if for any integer $0< j\leq k$, $p_j\neq p'_j$.
  A path $t=(p_0,\ldots, p_k)$ is a cycle if $k>0$ and $p_0=p_k$.  
The automaton $A$ is  \emph{deterministic} if the two following properties hold: $|I|=1$ and $\forall (q,a)\in Q\times \Sigma,\ |\delta(q,a)|\leq 1$.
A state $q$ in $Q$ is  \emph{accessible} (resp. \emph{coaccessible}) if there exists a word $w$ in $\Sigma^*$ such that $q\in\delta(I,w)$ (resp. $\delta(q,w)\cap F\neq\emptyset$). The automaton $A$ is \emph{accessible} (resp. \emph{coaccessible})  if any state in $Q$ is accessible (resp. coaccessible). The automaton is \emph{trim} if any state in $Q$ is accessible and coaccessible.
  
Given a word $w$ and an $n$-state automaton $A$, the membership test~\cite{HMU07}, \emph{i.e.} deciding whether $w$ belongs to $L(A)$, can be performed in time $O(n^2\times |w|)$ and in space $O(n)$. Let us suppose that $A'$ is the  $n'$-state deterministic automaton of $A$ (computed as the classical accessible part of the powerset automaton of $A$). The membership test can be performed in time $O(|w|)$ and in space $O(1)$, but $n'$ can be exponentially greater than $n$.

Glushkov~\cite{Glu61} and McNaughton and Yamada~\cite{MY60} have independently defined the construction of the \emph{Glushkov automaton} or \emph{position automaton} $G_E$ of a 
regular expression $E$. The number of states $s$ of $G_E$ is a linear function of the width \modif{$n$}{$|E|$} of $E$ (\emph{i.e.} the number of occurrences of the symbols of $\Sigma$ in $E$); in fact, \modif{$s=n+1$}{$s=|E|+1$}.
The automaton $G_E$ is a 
$(|E|+1)$-state  automaton that recognizes $L(E)$.

A regular expression $E$ is \emph{deterministic} if and only if its Glushkov automaton is. A language is  $1$-\emph{unambiguous} if there exists a deterministic expression to denote it. Br\"uggemann-Klein and Wood~\cite{BW98} have shown that determining whether a regular language is $1$-unambiguous or not is a decidable problem. Furthermore, they proposed a characterization and showed that both $1$-unambiguous languages and non $1$-unambiguous regular languages exist.

The notion of $k$-lookahead determinism~\cite{HW08} extends the one of $1$-unambi\-guity of expressions. In that purpose, Han and Wood define the $k$\emph{-lookahead deterministic position automaton} of an expression.

\begin{definition}[\cite{HW08}]\label{def kla}
  Let $A=(\Sigma,Q,I,F,\delta)$ be a position automaton of an expression. Then $A$ is a \emph{deterministic} $k$\emph{-lookahead automaton} if for any state $q_0$ in $Q$, where $(q_0,a_0,q_0)$, $(q_0,a_1,q_1)$, $\ldots$, $(q_0,a_m,q_m)$ are the out-transitions of $q_0$, with $q_i\neq q_j$ for $0\leq i,j\leq m$, it holds: $a_i\cdot \mathbb{F}_{k-1}(q_i) \cap a_j\cdot \mathbb{F}_{k-1}(q_j)=\emptyset$, where $0\leq i< j\leq m$ and $\mathbb{F}_{k-1}(q_i)$ is the set of words of length $k-1$ that labels a path starting at $q_i$.
\end{definition}

Notice that this definition can be extended to any automaton that is not a position one.
Informally, an automaton is $k$-lookahead deterministic if and only if for any state $q$, for any word $w=a_1\cdots a_k$ of length $k$, all  paths from $q$ labelled by $w$ have a common first transition  (see Figure~\ref{fig ex general klh}).
\modif{}{An automaton is lookahead-deterministic if there exists an integer $k$ such that it is $k$-lookahead-deterministic.}

\begin{figure}[H]
  \begin{minipage}{0.49\linewidth}
    \centerline{ 
      \begin{tikzpicture}[node distance=1.5cm,bend angle=30]   
	    \node[state] (q) {$q$};
	    \node (q') [right of=q]{};
	    \node[state] (q1) [above right of=q']{$q_1$};
	    \node[state] (q2) [right of=q']{$q_2$};
	    \node[state] (q3) [below right of=q']{$q_3$};
	    \path[->,dashed]
	      (q)   edge node {$w$} (q1)
	      (q)   edge node {$w$} (q2)
	      (q)   edge node {$w$} (q3);	            
      \end{tikzpicture}
    }   
    \end{minipage}
    \hfill    
  \begin{minipage}{0.49\linewidth}
    \centerline{  
      \begin{tikzpicture}[node distance=2.2cm,bend angle=25,transform shape,scale=0.75]   
	    \node[state] (q) {$q$};  
	    \node[state] (q'') [node distance=1.7cm, below right of=q] {$q''$};
	    \node (q''') [right of=q''] {$\emptyset$};
	    \node[state] (q') [above right of=q]{$q'$};
	    \node[state] (q1) [above right of=q']{$q_1$};
	    \node[state] (q2) [right of=q']{$q_2$};
	    \node[state] (q3) [below right of=q']{$q_3$};
	    \path[->]
	      (q)   edge[swap]  node {$a_1$} (q'')
	      (q'')   edge[dashed]  node {$a_2\cdots a_k$} (q''')
	      (q)   edge node {$a_1$} (q')
	      (q')   edge[dashed] node {$a_2\cdots a_k$} (q1)
	      (q')   edge[dashed] node {$a_2\cdots a_k$} (q2)
	      (q')   edge[dashed] node {$a_2\cdots a_k$} (q3);	            
      \end{tikzpicture}     
    }   
    \end{minipage}
    \caption{The $k$-lookahead determinism}
    \label{fig ex general klh}
  \end{figure}
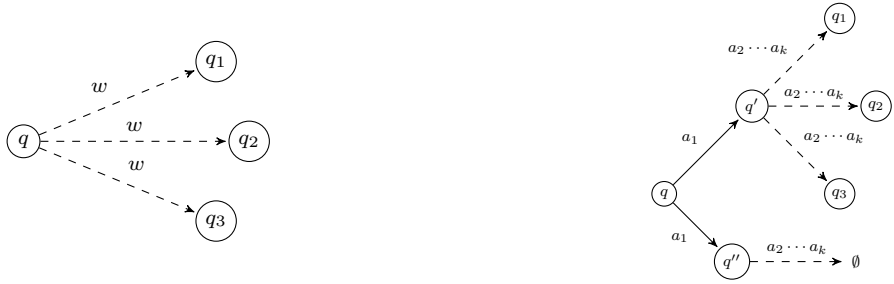
  
  Brzozowski and Santean~\cite{BS09} introduced the notion of predictability for an automaton and linked it to the one of lookahead determinism: as far as an automaton admits a unique initial state, it is $k$-predictable if and only if it is $(k+1)$-lookahead deterministic.
  
  In order to decide whether a given automaton is predictable, they make use of the square automaton defined as follows: let $A=(\Sigma,Q,I,F,\delta)$. The \emph{square automaton} $s_A$ of $A$ is the automaton $(\Sigma,Q\times Q, I\times I, F\times F, \delta')$ where for any pair $(q_1,q_2)$ of states in $Q$, for any symbol $a$ in $\Sigma$, $\delta'((q_1,q_2),a)=\delta(q_1,a)\times \delta(q_2,a)$.
  
  Finally, \modif{from the square automaton,}{} they define the pair automaton\modif{of}{, the subautomaton of the square automaton restricted to the} critical subsets of $Q$ (the set of initial states and the sets of successors of a \modif{fork}{state with at least two distinct successors}). An automaton is predictable if and only if its pair automaton admit no cycle. A closely related method has already been applied in comparable settings for Moore machines~\cite{Koh90}.

\section{The (k,l)-unambiguity}\label{se:klna}


  The definition of $k$-lookahead determinism can be extended by the introduction of an additional parameter $l$. The maximal length of ambiguity in two distinct paths from the same state and labelled by a same word is bounded by this parameter. Hence\modif{}{,} an automaton is said to be $(k,l)$-unambiguous ($l\leq k$) if and only if for any state $q$, for any word $w=a_1\cdots a_k$ of length $k$, if there exist at least two distinct paths from $q$ labelled  by $w$,  then there exists an integer $i\leq l$ such that all these paths share a common successor after a path of length $i$ (see Figure~\ref{fig ex general klna}).

\begin{figure}[H]
  \begin{minipage}{0.49\linewidth}
    \centerline{ 
      \begin{tikzpicture}[node distance=1.5cm,bend angle=30]   
	    \node[state] (q) {$q$};
	    \node (q') [right of=q]{};
	    \node[state] (q1) [above right of=q']{$q_1$};
	    \node[state] (q2) [right of=q']{$q_2$};
	    \node[state] (q3) [below right of=q']{$q_3$};
	    \path[->, dashed]
	      (q)   edge node {$w$} (q1)
	      (q)   edge node {$w$} (q2)
	      (q)   edge node {$w$} (q3);	            
      \end{tikzpicture}
    }   
    \end{minipage}
    \hfill    
  \begin{minipage}{0.49\linewidth}
    \centerline{  
      \begin{tikzpicture}[node distance=2.2cm,bend angle=30,scale=0.8]   
	    \node[state] (q) {$q$};  
	    \node[state] (q'') [node distance=1.8cm, below right of=q] {$q''$};
	    \node (q''') [node distance=2cm, right of=q''] {$\emptyset$};
	    \node[state] (q') [above right of=q]{$q'$};
	    \node[state] (q1) [above right of=q']{$q_1$};
	    \node[state] (q2) [right of=q']{$q_2$};
	    \node[state] (q3) [below right of=q']{$q_3$};
	    \path[->,dashed]
	      (q)   edge [bend left] node {$a_1\cdots a_i$} (q')
	      (q)   edge[swap]  node {$a_1\cdots a_i$} (q'')
	      (q'')   edge [swap] node {$a_{i+1}\cdots a_k$} (q''')
	      (q)   edge node {} (q')
	      (q)   edge[bend right] node {} (q')
	      (q')   edge node {$a_{i+1}\cdots a_k$} (q1)
	      (q')   edge node {$a_{i+1}\cdots a_k$} (q2)
	      (q')   edge node {$a_{i+1}\cdots a_k$} (q3);	            
      \end{tikzpicture}     
    }   
    \end{minipage}
    \caption{The $(k,l)$-unambiguity}
    \label{fig ex general klna}
  \end{figure}
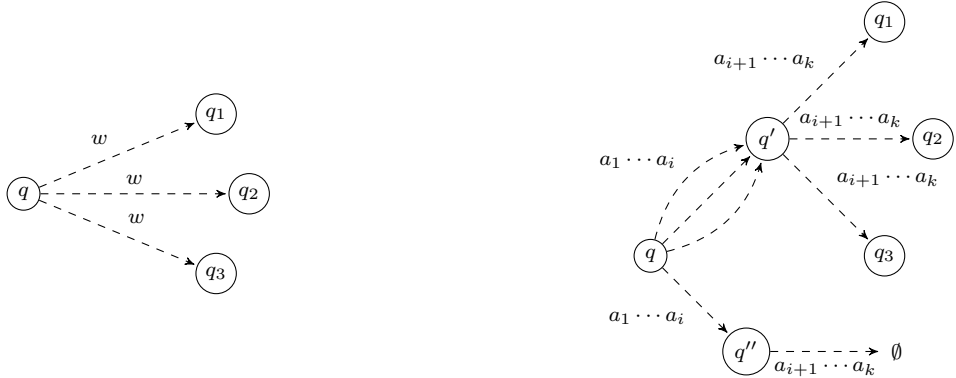 
  
\begin{definition}\label{def aut klna}
  Let $k$ and $l$ be two integers such that $l\leq k$. A finite automaton $A=(\Sigma,Q,I,F,\delta)$ is $(k,l)$-\emph{unambiguous} if $|I|=1$ and if for any state $q$ in $Q$, for any word $w$ in $\Sigma^k$, there exists an integer $1\leq i\leq l$ such that:
  
  \centerline{
    $|
      \{q'\in Q : q'\in\delta(q,w[1,i])\ \wedge\ \delta(q',w[i+1,k])\neq\emptyset\}
    |\leq 1$.
  }
\end{definition}

As a direct consequence of this definition, it holds that any $(k,l)$-unambi\-guous automaton is also a $(k,l+1)$-unambiguous automaton whenever $l<k$.

The following example enlightens the notion of $(k,l)$-unambiguity while illustrating the difference between $(k,l)$-unambiguity and $k$-lookahead determinism.

\begin{example}\label{ex aut klna pour cas concrets}
  Let us consider the automaton $A=(\Sigma,Q,I,F,\delta)$ in Figure~\ref{fig ex aut klna pour cas concrets}.
  Let us notice that for $q=q_0$, $w=aba$, for all $1\leq i\leq 3$, $|\delta(q_0,w[1,i])|> 1$. 
  As a consequence, the automaton is not $(3,i)$-unambiguous.
  Increasing the length $k$ of the window allows us to avoid this ambiguity. Indeed, for any word $w$ of length $4$, $|\delta(q_0,w)|\leq 1$. Hence $A$ is $(4,4)$-unambiguous.
  Furthermore, $A$ is also $(4,3)$-unambiguous 
  but not $(4,2)$-unambiguous.
  Finally, let us notice that this automaton is not $k$-lookahead deterministic for any integer $k$ since for any integer $j$ and for any prefix $w=aw'$ of $(abaa)^j$, $\delta(q_0,a)=\{1,2\}$ and $w'\in \mathbb{F}_{|w'|}(1)\cap \mathbb{F}_{|w'|}(2)$. 
\end{example}

\begin{figure}[H]
  \centerline{ 
    \begin{tikzpicture}[node distance=1.5cm,bend angle=30]   
	    \node[initial,state] (0) {$q_0$};
	    \node[state] (1) [above right of=0]{$q_1$};
	    \node[state] (2) [below right of=0]{$q_2$};
	    \node[state] (3) [right of=1]{$q_3$};
	    \node[state] (4) [right of=2]{$q_4$};
	    \node[state] (5) [right of=3]{$q_5$};
	    \node[state] (6) [right of=4]{$q_6$};
	    \node[state,accepting] (7) [right of=5]{$q_7$};
	    \node[state,accepting] (8) [right of=6]{$q_8$};
	    \path[->]
	      (0)   edge node {$a$} (1)
	      (0)   edge[swap] node {$a$} (2)
	      (1)   edge node {$b$} (3)
	      (2)   edge[swap] node {$b$} (4)
	      (3)   edge node {$a$} (5)
	      (3)   edge node {$a$} (6)
	      (4)   edge[swap] node {$a$} (6)
	      (5)   edge node {$b$} (7)
	      (6)   edge node {$c$} (8)
	      (6)   edge[swap] node {$a$} (0);	            
    \end{tikzpicture}
  }   
  \label{fig ex aut klna pour cas concrets}
  \caption{The automaton of Example~\ref{ex aut klna pour cas concrets}.}
  \end{figure}
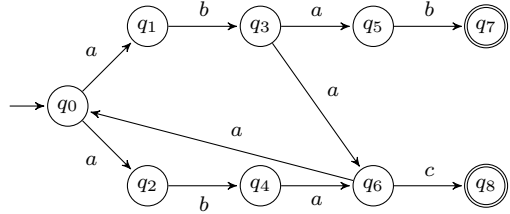

\modif{}{Let us now explicit the difference between the $k$-lookahead determinism and the $(k,l)$-unambiguity. First, as a direct consequence of Definition~\ref{def kla} and Definition~\ref{def aut klna}, the following proposition holds.}
\begin{proposition}\label{prop eq k1na klh}
  An automaton is deterministic $k$-lookahead if and only if it is  $(k,1)$-unambiguous.
\end{proposition}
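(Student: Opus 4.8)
The plan is to show that the local constraint that $k$-lookahead determinism places at each state coincides, word by word, with the one imposed by $(k,1)$-unambiguity, and then to check that both notions equally demand a unique initial state. First I would specialize Definition~\ref{def aut klna} to $l=1$. Since the quantified integer must satisfy $1\le i\le l=1$, its only admissible value is $i=1$; hence $A$ is $(k,1)$-unambiguous if and only if $\mathrm{Card}(I)=1$ and, for every state $q$ and every $w\in\Sigma^k$,
\[
  \mathrm{Card}(\{q'\in Q\mid q'\in\delta(q,w[1])\ \wedge\ \delta(q',w[2,k])\neq\emptyset\})\leq 1.
\]
Writing $a=w[1]$ and $v=w[2,k]$ (so $|v|=k-1$) and recalling that $\delta(q',v)\neq\emptyset$ is exactly $v\in\mathbb{F}_{k-1}(q')$, this reads: among the $a$-successors of $q$, at most one admits a path labelled by $v$.

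Next I would rewrite the $k$-lookahead condition in the same vocabulary. Fix $q_0$ and two of its out-transitions with distinct targets, $(q_0,a_i,q_i)$ and $(q_0,a_j,q_j)$ with $q_i\neq q_j$. If $a_i\neq a_j$, then the words of $a_i\cdot\mathbb{F}_{k-1}(q_i)$ and of $a_j\cdot\mathbb{F}_{k-1}(q_j)$ have different first letters, so their intersection is automatically empty and the required condition is vacuous. The content of the definition therefore lies in the case $a_i=a_j=a$, where $a\cdot\mathbb{F}_{k-1}(q_i)\cap a\cdot\mathbb{F}_{k-1}(q_j)=\emptyset$ is equivalent to $\mathbb{F}_{k-1}(q_i)\cap\mathbb{F}_{k-1}(q_j)=\emptyset$. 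Thus $A$ is $k$-lookahead deterministic if and only if, for every state $q_0$ and every two distinct $a$-successors $q_i,q_j$ of $q_0$, no word of length $k-1$ labels a path from both; equivalently, at most one $a$-successor of $q_0$ admits a path labelled by any fixed $v\in\Sigma^{k-1}$.

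These two reformulations are the same statement, quantified over $q$ (resp. $q_0$) and over the pair $(a,v)$ (resp. $w=av\in\Sigma^k$). To pass between the intersection formulation and the cardinality formulation, the one point I would spell out carefully is the length bookkeeping: any $u\in a_i\cdot\mathbb{F}_{k-1}(q_i)\cap a_j\cdot\mathbb{F}_{k-1}(q_j)$ has length $k$, so its decompositions $u=a_iv_i=a_jv_j$ force $a_i=a_j$ and $v_i=v_j$; this is exactly what converts a nonempty intersection into two distinct $a$-successors of $q_0$ sharing a common continuation $v$, and conversely. Running this two-line argument in both directions yields the equivalence of the local conditions at every state. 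It then remains to dispatch the global requirement $\mathrm{Card}(I)=1$, which is explicit in Definition~\ref{def aut klna} and is built into the $k$-lookahead setting as well (Han and Wood state their definition for position automata, which have a single initial state, and the extension to arbitrary automata retains the unique-initial-state hypothesis, in accordance with the Brzozowski--Santean reading recalled above).

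The main obstacle is not conceptual but notational: I expect the only genuine care to be needed in matching the concatenation-based Han--Wood formulation with the set-builder formulation of Definition~\ref{def aut klna}, specifically the observations that $a_i\neq a_j$ renders the lookahead condition vacuous and that the common length $k$ makes the decomposition $u=av$ unique. Once these are in place, the equivalence of the two notions follows immediately.
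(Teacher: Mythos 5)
Your proof is correct and follows essentially the same route as the paper's: both arguments reduce to the observation that a word $w=av\in\Sigma^k$ witnesses a violation of one definition exactly when it witnesses a violation of the other, via the unique decomposition of a length-$k$ word as $a\cdot v$ and the identity $\delta(q',v)\neq\emptyset \Leftrightarrow v\in\mathbb{F}_{k-1}(q')$. The paper proves the two contrapositive implications separately (treating $k=1$ as a special case), whereas you package the same content as a chain of reformulations and are somewhat more explicit on two points the paper glosses over, namely that distinct first letters make the lookahead condition vacuous and that $\mathrm{Card}(I)=1$ is required on both sides.
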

\modif{\begin{proof}
  Let $A=(\Sigma,Q,I,F,\delta)$ be an automaton, $k$ be an integer and $q$ be a state in $Q$.
  \begin{enumerate}
    \item Suppose that $A$ is not $(k,1)$-unambiguous. Hence there exists a word $w$ of length $k$ such that for any integer $i\leq k$:
  
  \centerline{
    $\mathrm{Card}(
      \{q'\in Q\mid q'\in\delta(q,w[1,i])\ \wedge\ \delta(q',w[i+1,k])\neq\emptyset\}
    )> 1$.
  }
  
  Then \modif{exists}{exist} two distinct states $q_1$ and $q_2$ in $Q$ such that $q_1\in\delta(q,w[1,1])$ $\wedge$ $\delta(q_1,w[2,k])\neq\emptyset$ and $q_2\in\delta(q,w[1,1])\ \wedge\ \delta(q_2,w[2,k])\neq\emptyset$.
  Either $k=1$ and $A$ is not deterministic (\emph{i.e.} $A$ is not deterministic $1$-lookahead), or $k>1$ and by definition of $\mathbb{F}_{k-1}(q_1)$ and $\mathbb{F}_{k-1}(q_2)$, $w[2,k]\in \mathbb{F}_{k-1}(q_1)\cap \mathbb{F}_{k-1}(q_2)$. Hence 
  
  \centerline{
    $w[1,1]\cdot \mathbb{F}_{k-1}(q_1) \cap w[1,1]\cdot \mathbb{F}_{k-1}(q_2)\neq \emptyset$.
  }
  
  As a consequence, $A$ is not deterministic $k$-lookahead.
  
  \item Suppose that $A$ is not deterministic $k$-lookahead. Then there \modif{exists}{exist} three states $q_0$, $q_1$ and $q_2$ such that the transitions $(q_0,a,q_1)$ and $(q_0,a,q_2)$ belongs to $\delta$ and $a\cdot \mathbb{F}_{k-1}(q_1) \cap a\cdot \mathbb{F}_{k-1}(q_2)\neq \emptyset$. Hence there exists a word $w'$ of length $k-1$ such that $\delta(q_1,w')\neq\emptyset$ and $\delta(q_2,w')\neq\emptyset$. Consequently the states $q_1$ and $q_2$ both belong to the set $\{q'\in Q\mid q'\in\delta(q,a)\ \wedge\ \delta(q',w')\neq\emptyset\}$ and then $A$ is not $(k,1)$-unambiguous.
  \end{enumerate}
  
\end{proof}}{}

\begin{proposition}
  \modif{T}{For any integer $k$, t}here exists  a \modif{$(k,l)$}{$(k,k-1)$}-unambiguous automaton which is  not \modif{$k'$-}{}lookahead deterministic\modif{ for any integer $k'$}{}. 
\end{proposition}
\begin{proof}
  An illustration is given in Example~\ref{ex aut klna pour cas concrets}.
  \modif{}{This example can be easily generalized by considering the $(k,k-1)$-unambiguous automaton $A_k$ in Figure~\ref{general fig ex aut kl}.}
\end{proof}

\modif{}{
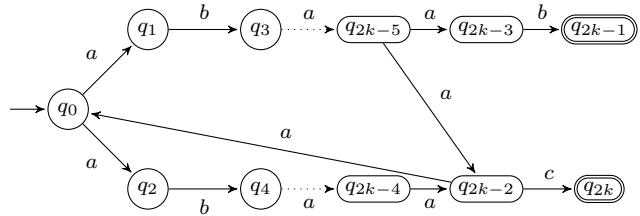
\begin{figure}[H]
  \centerline{ 
    \begin{tikzpicture}[node distance=1.5cm,bend angle=30]   
	    \node[initial,state] (0) {$q_0$};
	    \node[state] (1) [above right of=0]{$q_1$};
	    \node[state] (2) [below right of=0]{$q_2$};
	    \node[state] (2p) [right of=2]{$q_4$};
	    \node[state] (3) [right of=1]{$q_3$};
	    \node[state,rounded rectangle] (3p) [right of=3]{$q_{2k-5}$};
	    \node[state,rounded rectangle] (4) [right of=2p]{$q_{2k-4}$};
	    \node[state,rounded rectangle] (5) [right of=3p]{$q_{2k-3}$};
	    \node[state,rounded rectangle] (6) [right of=4]{$q_{2k-2}$};
	    \node[state,accepting,rounded rectangle] (7) [right of=5]{$q_{2k-1}$};
	    \node[state,accepting,rounded rectangle] (8) [right of=6]{$q_{2k}$};
	    \path[->]
	      (0)   edge node {$a$} (1)
	      (0)   edge[swap] node {$a$} (2)
	      (1)   edge node {$b$} (3)
	      (2)   edge[swap] node {$b$} (2p)
	      (2p)   edge[swap,dotted] node {$a$} (4)
	      (3)   edge[dotted] node {$a$} (3p)
	      (3p)   edge node {$a$} (5)
	      (3p)   edge node {$a$} (6)
	      (4)   edge[swap] node {$a$} (6)
	      (5)   edge node {$b$} (7)
	      (6)   edge node {$c$} (8)
	      (6)   edge[swap] node {$a$} (0);	            
    \end{tikzpicture}
  }   
  \caption{A $(k,k-1)$-unambiguous automaton.}  \label{general fig ex aut kl}

\end{figure}
}

Whenever an automaton is not $(k,l)$-unambiguous for any couple $(k,l)$ of integers \modif{}{(\emph{e.g.} when two distinct states with a loop can be reached from the same state by the same word)}, there exists a state from which it cannot be decided without ambiguity which successor will appear during the run. Hence there exists an infinite hesitation between two paths, that can be decided \emph{via} the square automaton.

\begin{theorem}\label{theorem caract klna}
  Let $A=(\Sigma,Q,I,F,\delta)$ be an accessible automaton and $P$ be the accessible part of its square-automaton. The two following propositions are equivalent:
  \begin{enumerate}
    \item there exists a couple $(k,l)$ such that $A$ is $(k,l)$-unambiguous,
    \item every cycle in $P$ contains a pair $(p,p)$ for some $p$ in $Q$.  \end{enumerate}
\end{theorem}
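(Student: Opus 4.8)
\emph{The plan is} to translate the $(k,l)$-unambiguity condition into the language of the square automaton and then reduce each implication to a statement about cycles. Fix a state $q$ and a word $w\in\Sigma^k$, and call $q'$ a \emph{survivor at position $i$} if $q'\in\delta(q,w[1,i])$ and $\delta(q',w[i+1,k])\neq\emptyset$; write $T_i$ for the set of survivors. Then ``$\geq 2$ survivors at position $i$'' is equivalent to the existence of a pair $(r,s)$ with $r\neq s$ in $\delta'((q,q),w[1,i])$ satisfying $\delta'((r,s),w[i+1,k])\neq\emptyset$. Since a pair reads a word exactly when each component does, one gets the key product identity: the set of surviving pairs at position $i$ equals $T_i\times T_i$. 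Hence $A$ fails to be $(k,l)$-unambiguous, as witnessed by $(q,w)$, if and only if $|T_i|\geq 2$ for every $i\in\{1,\ldots,l\}$. I assume throughout that $\mathrm{Card}(I)=1$ (this is required by Definition~\ref{def aut klna} for proposition~1 to be meaningful); the initial pair $(i,i)$ of $P$ is then diagonal.

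\emph{First I would prove $1\Rightarrow 2$ by contraposition.} Assume $P$ has a cycle $C$ containing no diagonal pair. As $P$ is accessible, choose a path from $(i,i)$ to $C$ and let $(q^*,q^*)$ be the \emph{last} diagonal pair on it; after this pair the path stays non-diagonal, reaches $C$, and may then loop along $C$ indefinitely. Let $v'$ label the segment from $(q^*,q^*)$ up to $C$ and $u$ label $C$. Given any $(k,l)$, set $q=q^*$ and let $w$ be the length-$k$ prefix of $v'u^{\omega}$. Following this single run in $P$, every position $1\leq i\leq k$ lies on a non-diagonal pair that still reads the remaining suffix, so $|T_i|\geq 2$ for all $i\leq l$; thus $A$ is not $(k,l)$-unambiguous, for \emph{every} $(k,l)$. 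This is exactly the contrapositive of $1\Rightarrow 2$.

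\emph{The hard direction is $2\Rightarrow 1$, again by contraposition.} Suppose $A$ is $(k,l)$-unambiguous for no $(k,l)$; in particular it is not $(N,N)$-unambiguous for $N=2^{\mathrm{Card}(Q)}+1$, which yields $q,w$ with survivor sets $T_0,\ldots,T_N$ and $|T_i|\geq 2$ for $i\geq 1$. By the pigeonhole principle two of these sets coincide, $T_\beta=T_{\beta'}=:T$ with $|T|\geq 2$; put $u=w[\beta+1,\beta']$. Reading $u$ induces a relation $\rho$ on $T$ that is \emph{total on both sides}: every survivor in $T$ has a survivor successor and a survivor predecessor (by definition of $T_\beta,T_{\beta'}$). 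The combinatorial heart is then a lemma stating that a both-sides-total relation on a set of size $\geq 2$ forces a non-diagonal cycle. I would prove it by fixing a section $f\subseteq\rho$ (one successor per state); its eventual image $T^{\infty}$ is \emph{permuted} by $f$, and as soon as $T^{\infty}$ contains two distinct states $x\neq y$, injectivity of the permutation keeps $(f^t x,f^t y)$ off the diagonal for all $t$, producing a non-diagonal cycle. Because all these pairs lie in $T\times T\subseteq\delta'((q,q),w[1,\beta])$, they are accessible in $P$, contradicting~$2$.

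\emph{The step I expect to be the main obstacle} is precisely the promotion of this cycle to a genuine cycle of $P$, read one symbol at a time. Compressing transitions along the word $u$ can hide a diagonal pair \emph{strictly inside} a $u$-block, so a word-level off-diagonal loop need not be an off-diagonal cycle of $P$; avoiding this requires choosing the section $f$ so that its eventual image has width at least two at \emph{every} level, rather than collapsing to a single thread. Establishing this width-two permutation is where both-sides-totality together with the hypothesis $|T_i|\geq 2$ must be exploited most carefully (the degenerate case of a single absorbing cycle must be excluded by showing that a second both-sides-reachable state forces a longer, genuinely non-diagonal, cycle), and it is the part I anticipate will cost the most work.
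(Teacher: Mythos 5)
Your preliminary reduction (the survivor sets $T_i$, the product identity, and the restatement of failure of $(k,l)$-unambiguity as $|T_i|\geq 2$ for all $i\leq l$) is correct, and your proof of $1\Rightarrow 2$ — last diagonal pair on the access path, then pumping around the off-diagonal cycle — is sound and is essentially the paper's own argument for that direction. The genuine gap is in $2\Rightarrow 1$, and it is exactly the step you flag at the end. Because your pigeonhole is applied to survivor \emph{sets}, the relation $\rho$ and the section $f$ live at the level of whole $u$-blocks, so the best your combinatorial lemma can produce (even after the degenerate singleton-eventual-image case is repaired, which it can be) is a cycle of off-diagonal pairs in the automaton that reads $u$ as a single letter. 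That is not a cycle of $P$: the pairs strictly inside a $u$-block may be diagonal. Moreover this is not a defect that a cleverer choice of section can absorb, because the block-level data $(T,\rho)$ does not contain the needed information at all: if every path realizing $u$ from every state of $T$ goes through one common state $m$ at some intermediate position, then $\rho$ is still both-sides total and your construction still yields a block-level off-diagonal loop, yet every symbol-level lift of it passes through $(m,m)$. The only thing excluding that situation is the hypothesis $|T_i|\geq 2$ at the positions \emph{inside} the blocks — a hypothesis your combinatorial core never invokes. So the part you postpone as "the main obstacle" is not a final technical verification; it is the actual content of the theorem.

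The paper avoids this by staying at the symbol level throughout: Lemma~\ref{lem 2chem imp 2 chem tot dis} shows that $|T_i|\geq 2$ for every $i\leq k$ yields two paths labelled $w$ from $q$ that are \emph{totally distinct}, i.e. occupy distinct states at every single position (this lemma needs its own induction — the paper uses an exchange argument with a third path — and it is the one place where the everywhere-$\geq 2$ hypothesis is consumed); Lemma~\ref{lem 2chem total dis 1 chem car} lifts the two paths to a single path of $P$ that is off-diagonal at every step, and Lemma~\ref{lem q acc ds a qq acc ds car a} makes its origin $(q,q)$ accessible. The pigeonhole is then applied to the \emph{pairs} occurring along this one path, taking $k\geq \mathrm{Card}(Q)\times(\mathrm{Card}(Q)-1)$: a repeated pair closes a cycle that is automatically off-diagonal at every step, with no promotion problem. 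If you want to keep your scheme, the statement you must prove is precisely this totally-distinct-paths lemma; once you have it, the survivor-set pigeonhole, the relation $\rho$, and the section/permutation machinery all become unnecessary.
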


In order to prove Theorem~\ref{theorem caract klna}, let us first state the following lemmas.

\begin{lemma}\label{lem 2chem imp 2 chem tot dis}
  Let $A=(\Sigma,Q,I,F,\delta)$ be an automaton, $w\in \Sigma^*$ and $q\in Q$. The two following conditions are equivalent:
  \begin{itemize}
  \item  for any positive integer $k<|w|$, $|\{q': q'\in \delta(q,w[1,k]) \wedge \delta(q',w[k+1,|w|])\neq\emptyset\}
  |\geq 2$,
 \item  there \modif{exists}{exist} at least two totally distinct paths labelled by $w$ that starts with $q$.
\end{itemize}
\end{lemma}
\begin{proof}
  Let $k\geq 1$ be an integer. Let $w$ be a word in $\Sigma^k$ and $q$ be a state in $Q$ 
  such that for any integer $1\leq j\leq |w|$, $|\{q'\in \delta(q,w[1,j]) : \delta(q',w[j+1,|w|])\neq\emptyset\}|\geq 2$. Then there \modif{exists}{exist} two paths $t=(p_0,\ldots, p_k)$ and $t'=(p'_0,\ldots, p'_k)$ labelled by $w$ that starts with $q$. If these two paths are totally distinct, then the lemma is valid, otherwise there exists a third path $t''=(p''_0,\ldots, p''_k)$ such that for any integer $0\leq j\leq k$, $p_j=p'_j$ $\Rightarrow$ $p''_j\neq p_j$. Let us show by recurrence on $j$ that there \modif{exists}{exist} two totally distinct paths $r=(r_0,\ldots, r_j)$ and $r'=(r'_0,\ldots, r'_j)$ from $q$ labelled by $w[1,j]$ such that $\{r_j,r'_j\}\in\{p_j,p'_j,p''_j\}$. 
  \textbf{(a)} Let us set $j=1$. Then by definition of $t''$, either $p_1\neq p'_1$ or $p_1\neq p''_1$. \modif{R}{The r}ecurrence hypothesis is satisfied.
  \textbf{(b)} Let us set $j > 1$. Let us suppose that there \modif{exists}{exist} two totally distinct paths labelled by $w[1,j]$. Without loss of generality, let us suppose that $r_j=p_j$ and $r'_j=p'_j$. If $p_{j+1}\neq p'_{j+1}$, adding these two distinct states respectively to the paths $r$ and $r'$ constructs two totally distinct paths labelled by $w[1,j+1]$, otherwise, and without loss of generality, let us suppose that $p'_j=p''_j$. Since by definition of $t''$, $p_{j+1}=p'_{j+1}$ $\Rightarrow$ $p''_{j+1}\neq p_{j+1}$. Considering $r_{j+1}=p_{j+1}$ and $r'_{j+1}=p''_{j+1}$,  two totally distinct paths labelled by $w[1,j+1]$ are computed.

\end{proof}

\modif{}{The  following lemma is straightforward and is useful  for  proving  Theorem~\ref{theorem caract klna}.}

\modif{\begin{lemma}\label{lem 2chem total dis 1 chem car}
  Let $A$ be an automaton and $P$ be its square-automaton. Let $w$ be a word in $\Sigma^*$ and $q$ be a state in $Q$. If there \modif{exists}{exist} two totally distinct paths labelled by $w$ that starts with $q$ in $A$, then there exists a path $p=(p_0,\ldots,p_k)$ in $P$ labelled by $w$ starting with $(q,q)$ such that for any integer $1\leq j\leq k$, $p_j=(c,c')$ with $c\neq c'$.
\end{lemma}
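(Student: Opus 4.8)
The plan is to build the required path in $P$ \emph{directly} from the two given paths by pairing their states position by position; the product structure of the square automaton is precisely what makes this work. Concretely, let $p=(p_j)_{0\leq j\leq k}$ and $p'=(p'_j)_{0\leq j\leq k}$ be two totally distinct paths labelled by $w$ starting with $q$, whose existence is guaranteed by hypothesis (here $k=|w|$). I would set $P_j=(p_j,p'_j)$ for every $0\leq j\leq k$ and claim that $(P_j)_{0\leq j\leq k}$ is the desired path.

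First I would check the starting condition: since both paths start with $q$, we have $p_0=p'_0=q$, hence $P_0=(q,q)$. Next I would verify that this sequence is indeed a path labelled by $w$ in $P$. For each $0\leq j<k$, the fact that $p$ and $p'$ are paths labelled by $w$ gives $p_{j+1}\in\delta(p_j,w[j+1])$ and $p'_{j+1}\in\delta(p'_j,w[j+1])$. By the definition of the transition function $\delta'$ of the square automaton, $\delta'((p_j,p'_j),w[j+1])=\delta(p_j,w[j+1])\times\delta(p'_j,w[j+1])$, so the pair $(p_{j+1},p'_{j+1})$ belongs to this set, i.e. $P_{j+1}\in\delta'(P_j,w[j+1])$. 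Thus $(P_j)_{0\leq j\leq k}$ is a path labelled by $w$ in $P$ starting at $(q,q)$.

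Finally, the separation property follows immediately from total distinctness: for every $1\leq j\leq k$ we have $p_j\neq p'_j$ by definition of totally distinct paths, so writing $P_j=(c,c')$ with $c=p_j$ and $c'=p'_j$ yields $c\neq c'$, as required.

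There is no real obstacle here: the content of the lemma is exactly the observation that a pair of totally distinct runs of $A$ is the same data as a single run of the square automaton whose two coordinates never coincide after the first step. The only point deserving care is aligning the indices and matching $\delta'$ with the componentwise transitions, but this becomes routine once the pairing $P_j=(p_j,p'_j)$ is written down.
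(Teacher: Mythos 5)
Your proposal is correct and follows essentially the same route as the paper: both proofs pair the two totally distinct paths component-wise into $(p_j,p'_j)$ and use the product definition of the square automaton's transition function to conclude; the paper merely wraps the transition-by-transition verification in an explicit induction on the length of $w$, which your direct check of each step makes unnecessary.
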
}{}
\modif{\begin{proof}
  Let $A=(\Sigma,Q,I,F,\delta)$ and $P=(\Sigma,Q',I',F',\delta')$.
  
  Let $p_1=({p_1}_j)_{0\leq j\leq k}$ and $=({p_2}_j)_{0\leq j\leq k}$ be two totally distinct paths labelled by $w$ such that ${p_1}_0={p_2}_0=q$. Let $p=(({p_1}_j,{p_2}_j))_{0\leq j\leq k}$. Let us show by recurrence on the length of $w$ that $p$ is a path in $P$ labelled by $w$ that starts with $(q,q)$ such that for any integer $1\leq j\leq k$, $p_j=(c,c')$ with $c\neq c'$.
  
  Since ${p_1}_0={p_2}_0=q$, it holds that $p_0=(q,q)$.
  
  Let $k=1$. Since $p_1$ and $p_2$ are two distinct path labelled by $w$, then $({p_1}_1,{p_2}_1)\in\delta(q,w[1])\times \delta(q,w[1])$. Hence, $({p_1}_1,{p_2}_1)\in\delta'((q,q),w[1])$.  Furthermore, since $p_1$ and $p_2$ are totally distinct, ${p_1}_1\neq {p_2}_1$.
  
  Let $k\geq 2$. Suppose that for any integer $k'< k$, $p=(p_j)_{0\leq j\leq k'}$ in $P$ is labelled by $w[1,k']$ starting with $(q,q)$ such that for any integer $1\leq j\leq k'$, $p_j=({p_1}_j,{p_2}_j)$ with ${p_1}_j\neq {p_2}_j$. Since by definition of $p_1$ and $p_2$, ${p_1}_k$ belongs to $\delta({p_1}_{k-1},w[k])$ and ${p_2}_k$ belongs to $\delta({p_2}_{k-1},w[k])$, it holds that $({p_1}_k,{p_2}_k)$ belongs to $\delta'(({p_1}_{k-1},{p_2}_{k-1}),w[k])$. Hence $({p_1}_k,{p_2}_k) \in \delta'(({p_1}_{k-1},{p_2}_{k-1}),w)$. Furthermore, since $p_1$ and $p_2$ are totally distinct, ${p_1}_k\neq {p_2}_k$.
  
\end{proof}}{}

\begin{lemma}\label{lem q acc ds a qq acc ds car a}
  Let $A=(\Sigma,Q,I,F,\delta)$ be an automaton and $P=(\Sigma,Q',I',F',\delta')$ be its square-automaton. Let $w$ be a word in $\Sigma^*$ and $q_1$ and $q_2$ be two states in $Q$. If $q_2\in\delta(q_1,w)$ then $(q_2,q_2)\in\delta'((q_1,q_1),w)$.
\end{lemma}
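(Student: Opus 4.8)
The plan is to proceed by induction on the length of $w$, exploiting the fact that the diagonal $\{(q,q)\mid q\in Q\}$ of the square automaton behaves exactly like $A$ itself. The crucial single-symbol observation is that for any state $q$ and any symbol $a$, the definition of the square automaton gives $\delta'((q,q),a)=\delta(q,a)\times\delta(q,a)$; hence whenever $q'\in\delta(q,a)$ we immediately obtain $(q',q')\in\delta'((q,q),a)$. This says that a one-step move on the diagonal of $P$ faithfully simulates a one-step move in $A$, and the whole lemma is just the iteration of this fact.

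First I would treat the base case $|w|=0$: then $w=\varepsilon$, so $q_2\in\delta(q_1,\varepsilon)=\{q_1\}$ forces $q_2=q_1$, and $(q_2,q_2)=(q_1,q_1)\in\{(q_1,q_1)\}=\delta'((q_1,q_1),\varepsilon)$, as required. For the inductive step, I would write $w=aw'$ with $a\in\Sigma$ and $w'$ of length $|w|-1$, and use the extension rule $\delta(q_1,aw')=\delta(\delta(q_1,a),w')$. From $q_2\in\delta(q_1,aw')$ there is an intermediate state $q_3\in\delta(q_1,a)$ with $q_2\in\delta(q_3,w')$. The single-symbol observation gives $(q_3,q_3)\in\delta'((q_1,q_1),a)$, and the induction hypothesis applied to the shorter word $w'$ gives $(q_2,q_2)\in\delta'((q_3,q_3),w')$. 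Composing these two moves, along $a$ and then $w'$, through the extension rule for $\delta'$ yields $(q_2,q_2)\in\delta'((q_1,q_1),aw')=\delta'((q_1,q_1),w)$, closing the induction.

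There is essentially no hard part here: the statement is a routine transfer principle, and the only thing to be careful about is to invoke the product definition of $\delta'$ correctly at each step and to keep track of the intermediate state $q_3$. An equivalent and perhaps slicker route would be to first prove, again by a one-line induction, the stronger identity $\delta'((q_1,q_2),w)=\delta(q_1,w)\times\delta(q_2,w)$ for all pairs of states, and then specialize it to $q_1=q_2$: since $q_2\in\delta(q_1,w)$ we would get $(q_2,q_2)\in\delta(q_1,w)\times\delta(q_1,w)=\delta'((q_1,q_1),w)$ at once. I would favour whichever form is most convenient for the later use of the lemma in the proof of Theorem~\ref{theorem caract klna}, since this diagonal-reachability fact is exactly what lets one read cycles of $A$ off cycles through diagonal pairs $(p,p)$ in $P$.
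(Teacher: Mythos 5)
Your proof is correct and takes essentially the same approach as the paper: an induction on $|w|$ whose engine is the one-step diagonal observation $\delta'((q,q),a)=\delta(q,a)\times\delta(q,a)$. The only cosmetic difference is that you peel off the first symbol ($w=aw'$) whereas the paper peels off the last one; both reduce immediately to the same single-symbol simulation step.
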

\modif{\begin{proof}
  By recurrence on the length of $w$.
  
  If $|w|=0$, since $q_1\in\delta(q_1,\varepsilon)$ and $(q_1,q_1)\in\delta'((q_1,q_1),\varepsilon)$, the recurrence property holds.
  
  Let $|w|> 0$. Let us suppose that $q_2\in\delta(q_1,w[1,|w|-1]])$ and $(q_2,q_2)\in\delta'((q_1,q_1),w[1,|w|-1])$. Let $q'\in\delta(q_2,w[k])$. Hence $(q',q')\in\delta'((q_2,q_2),w[k])$. As a consequence, $(q',q')\in\delta'((q_1,q_1),w)$.
  
\end{proof}}{}

\begin{proof}[\modif{}{Proof of }Theorem~\ref{theorem caract klna}]
  Let us set $A=(\Sigma,Q,\{0\},F,\delta)$ and $P=(\Sigma,Q',I',F',\delta')$.
  
  $\mathbf{(\neg 2 \Rightarrow \neg 1)}$ Let us suppose that there exists a cycle $C$ in $P$ that does not contain any pair $(p,p)$ for all state $p$ in $Q$. As a consequence, there exists a path $T$ from $(0,0)$ to a state $s=(c,c')$ in $C$ such that any predecessor of the first occurrence of $s$ does not belong to $C$. Let $q$ be the state in $Q$ such that \textbf{(a)} $(q,q)$ appears on the path $T$ from $(0,0)$ to the first occurrence of $(c,c')$ and \textbf{(b)} there exists no state $p$ in $Q$ such that $(p,p)$ appears on the path $T$ between $(q,q)$ and the first occurrence of $(c,c')$. Notice that $q$ exists since $0$ satisfies the previous propositions. Hence for any integer $k\geq 1$, there exists a word $w$ in $\Sigma^k$ such that $\delta'((q,q),w)\neq\emptyset$ and such that $|\delta(q,w)|\geq 2$. Consequently, there exists no couple $(k,l)$ such that $A$ is $(k,l)$-unambiguous.
  
  $\mathbf{(\neg 1 \Rightarrow \neg 2)}$ Let us suppose that for every integer $k$, there exists a word $w$ in $\Sigma^k$ and a state $q$ in $Q$ such that 
  for any integer $i\leq k$, $|\{q' : q'\in \delta(q,w[1,i]) \wedge \delta(q',w[i+1,k])\neq\emptyset\}|\geq 2$. Hence according to Lemma~\ref{lem 2chem imp 2 chem tot dis}, 
  there \modif{exists}{exist} at least two totally distinct paths labelled by $w$ that start with $q$. Since $q$ is reachable from $0$, then it holds from Lemma~\ref{lem q acc ds a qq acc ds car a} that $(q,q)$ belongs to $Q'$ since it is reachable from $(0,0)$. According to 
 the definition of distinct paths, for any integer $k$, there exists a word in $\Sigma^k$ such that there exists a path $(p_0,\ldots, p_k)$ in $P$ labelled by $w$ starting with $(q,q)$ such that for any integer $1\leq j\leq k$, $p_j=(c,c')$ with $c\neq c'$. Finally, whenever $k\geq |Q|\times (|Q|-1)$, there \modif{exists}{exist} two integers $1\leq k_1<k_2\leq k$ such that $p_{k_1}=p_{k_2}$. Consequently there exists a cycle in $P$ that contains no pair $(p,p)$ for any $p$ in $Q$.  
  
  \end{proof}

Notice that Theorem~\ref{theorem caract klna} defines a polynomial decision procedure to test if, for a given NFA $A$, there exists a couple $(k,l)$ of integer\modif{}{s} such that $A$ is $(k,l)$-unambiguous.

\modif{}{In order to have an upper bound of the complexity of this decision procedure, let us consider a pair automaton $P$  of $n^2$ states. It is sufficient to remove all the states $(p,p)$ of $P$ and to check if the obtained automaton is acyclic, which can be done by applying $n^2$ times the linear time Tarjan  algorithm \cite{Tar72}  which leads to a complexity in $o(n^4)$.}

The next section is devoted to the definition of quasi-deterministic structures. These structures allow us to solve the membership problem with the same complexity as deterministic automata 
while being possibly exponentially smaller. Finally, we show in Section~\ref{se:klnanfa2qds} how to convert a $(k,l)$-unambiguous NFA into a quasi-deterministic structure.

\section{The quasi-deterministic structure}\label{se:qds}

A quasi-deterministic structure is 
a structure 
derived from an automaton: it embeds a second transition function
 that is 
used to shift the input window 
(of a fixed length) 
while reading a word (see Figure~\ref{fig ex qds graph}). 
In the following, the symbol $\bot$ is used to represent undefined states and transitions.

\begin{definition}
  A \emph{quasi-deterministic structure} (QDS) is a $8$-tuple $S=(\Sigma, m, \Gamma , {\cal Q},0,$ $F,\delta,\gamma)$ where:
  \begin{itemize}
    \item $\Sigma$ is the alphabet of words,
    \item $m$ is the number of levels,
    \item $\Gamma\subset [1,m-1]$ is the alphabet of shifts,
    \item ${\cal Q}= \bigcup _{j=1}^mQ_j$ is a family of $m$ disjoint set\modif{}{s} of states (levels),
    \item $0\in Q_1$ is the initial state,
    \item $F\subset {\cal Q}$ is the set of final states,
    \item $\delta$ is a \modif{}{total} function from $Q_j\times \Sigma$ to $Q_{j+1}\cup\{\bot\}$ for $j\in\modif{\{1,\ldots,m-1\}}{[1,m-1]}$,
    \item $\gamma$ is a \modif{}{total} function from $Q_m$ to $\Gamma  \times Q_1$.  \end{itemize}
  
  The function $\delta$ can be extended for any state $q$ in $\cal Q$,\modif{ for any state $q'$ in $Q_m$,}{} for any word $w$ in $\Sigma^*$ and for any symbol $a$ in $\Sigma$ to $\delta(q,\varepsilon)=q$, \modif{$\delta(q',a)=\bot$}{$q\in Q_m \Rightarrow \delta(q,a)=\bot$}, $\delta(\bot, a)=\bot$, $\delta(q,aw)=\delta(\delta(q,a),w)$. We will denote $\tilde{\gamma}$ (resp. $\overline{\gamma}$) the restriction of the function $\gamma$ to $\Gamma$ (resp. $Q_1$). The functions $\delta$ and $\gamma$ can also be seen as sets of triplets. 
  An \emph{edge} is an element of $\delta\cup \gamma$. Two edges $(p,x,p')$ and  $(q,y,q')$ are consecutive if $p'=q$.
  A \emph{path} in a QDS is a sequence $((q_1,x_1,q_2),(q_2,x_2,q_3),\ldots,$ $(q_{n-1},x_{n-1},q_n),(q_n,x_n,q_{n+1}))$ of \modif{}{consecutive} edges.  

\end{definition}

 
 An example of a QDS is given by Figure \ref{fig ex qds graph}. 
 
\begin{figure}[H]
\begin{minipage}{0.6\linewidth}
\begin{itemize}
\item $\Sigma=\{a,b\}$, $m=3$, $\Gamma=\{1,2\}$
\item ${\cal Q}= (Q_1,Q_2,Q_3)=(\{q_1,q_6\},\{q_2,q_3,q_7\},\{q_4,q_5,q_8\})$
\item $0=q_1$
\item $F=\{q_2,q_7\}$
\item $\begin{array}{llr}\delta=\{&(q_1,a,q_2), (q_1,b,q_3),(q_2,b,q_4),\\
& (q_2,a,q_5), (q_3,a,q_5),(q_3,b,q_5),\\
&(q_6,a,q_7),(q_6,b,q_7),(q_7,a,q_8) &\}
\end{array}$
\item $\gamma=\{(q_5,2,q_1),(q_4,1,q_6),(q_8,2,q_6)\}$
\end{itemize}
\end{minipage}
\begin{minipage}{0.39\linewidth}
    \modif{\begin{tikzpicture}[node distance=2cm,bend angle=30]   
	    \node[initial, state] (1) {$1$};     
	    \node[accepting, state, right of=1] (2) {$2$}; 
	    \node[state, right of=2] (4) {$4$}; 
	    \node[state, above of=2] (3) {$3$};  
	    \node[state, above of=4] (5) {$5$};  
	    \node[state, below of=1] (6) {$6$};  
	    \node[accepting, state, right of=6] (7) {$7$}; 
	    \node[state, right of=7] (8) {$8$};  
	    \path[->]
	      (1)   edge  node {$a$} (2)
	      (1)   edge  node {$b$} (3)
	      (2)   edge  node {$a$} (5)
	      (2)   edge  node {$b$} (4)
	      (3)   edge  node {$a,b$} (5)
	      (6)   edge  node {$a,b$} (7)
	      (7)   edge  node {$a$} (8)
	      (5)   edge[dotted,swap, out=135, in = 90]  node {$2$} (1)
	      (4)   edge[dotted]  node {$1$} (6)
	      (8)   edge[dotted,bend left]  node {$2$} (6)
	      ;	            
      \end{tikzpicture}  }{
    \begin{tikzpicture}[node distance=2cm,bend angle=30]   
	    \node[initial, state] (1) {$q_1$};     
	    \node[accepting, state, right of=1] (2) {$q_2$}; 
	    \node[state, right of=2] (4) {$q_4$}; 
	    \node[state, above of=2] (3) {$q_3$};  
	    \node[state, above of=4] (5) {$q_5$};  
	    \node[state, below of=1] (6) {$q_6$};  
	    \node[accepting, state, right of=6] (7) {$q_7$}; 
	    \node[state, right of=7] (8) {$q_8$};  
	    \path[->]
	      (1)   edge  node {$a$} (2)
	      (1)   edge  node {$b$} (3)
	      (2)   edge  node {$a$} (5)
	      (2)   edge  node {$b$} (4)
	      (3)   edge  node {$a,b$} (5)
	      (6)   edge  node {$a,b$} (7)
	      (7)   edge  node {$a$} (8)
	      (5)   edge[dotted,swap, out=135, in = 90]  node {$2$} (1)
	      (4)   edge[dotted]  node {$1$} (6)
	      (8)   edge[dotted,bend left]  node {$2$} (6)
	      ;	            
      \end{tikzpicture}  }
  \end{minipage}
  \caption{The quasi-deterministic structure $S=(\Sigma, m, \Gamma,{\cal Q},0,F,\delta,\gamma)$.}
  \label{fig ex qds graph}
\end{figure}
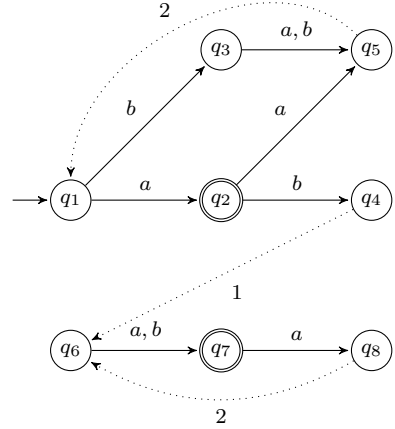

\modif{Such a structure}{In a classical automaton, a  path is successful if it starts from an initial state and ends on a final one. A QDS} can also be used as a recognizer. \modif{: a word $w$ is recognized if it labels a successful path}{} However, the label of a path in a QDS has a different meaning.
\modif{Indeed, some factors of the word can be repeated all along the path.}{Indeed, a word is read in a window of size $m-1$ (where $m$ is the number of levels) which is shifted at each $\gamma$-transition. So, a factor of this word can be read several times.}  

We define the extended transition function in a QDS. This new definition allows us to define the language recognized by a QDS.
\begin{definition}\label{def ext delta}
  Let $S=(\Sigma, m, \Gamma, {\cal Q},0,F,\delta,\gamma)$ be a QDS.  The \emph{extended transition function of} $S$ is the function $\Delta$ from $Q_1 \times \Sigma^*$ to ${\cal Q} \cup\{\bot\}$ defined for any pair $(q,w)$ in $Q_1\times\Sigma^*$ by:
  
  \centerline{
    $\Delta(q,w)=
      \left\{
        \begin{array}{l@{\ }l}
          \delta(q,w) & \text{ if } |w|\leq m-1,\\
                    \bot &\text{ if } |w|\geq m\ \wedge  \delta(q,w[1,m-1])=\bot\\ 
          \Delta(q',w[j+1,|w|])\\
          \ \ \ \  \text{ where } \gamma(\delta(q,w[1,m-1]))=(q',j) & \text{ otherwise.}
        \end{array}
      \right.$
  }
\end{definition}

\begin{definition}\label{def lang qds}
   Let $S=(\Sigma, m, \Gamma, {\cal Q},0,F,\delta,\gamma)$ be a quasi-deterministic structure. The \emph{language of} $S$ is the language $L(S)$ defined by:
   
   \centerline{
     $L(S)=\{w\in\Sigma^*\mid \Delta(0,w)\in F\}$.
   }
\end{definition}

\begin{example}\label{ex membership prob qds}
  Let us consider the structure $S$ defined in Figure~\ref{fig ex qds graph}. Let $w=bbbaabab$. \modif{F}{The f}ollowing computation illustrates that \modif{$\Delta(1,w)=7$}{$\Delta(q_1,w)=q_7$}, and since $\modif{7}{q_7}\in F$, it holds that $w\in L(S)$.
  
  \centerline{
    \begin{tikzpicture}
      \matrix (mat) [matrix of nodes,ampersand replacement=\&,row sep=0.5cm]{
          $\modif{\mathbf{1}}{\mathbf{q_1}}$ \& $b$ \& $b$ \&     \& $b$ \& $a$ \&     \& $a$ \&     \& $b$ \&     \& $a$ \&     \& $b$ \&\& \ \ \ \ \ \ ($\delta(\modif{1}{q_1},bb)=\modif{5}{q_5}$)\\
           \& $b$ \& $b$ \& $\modif{\mathbf{5}}{\mathbf{q_5}}$ \& $b$ \& $a$ \&     \& $a$ \&     \& $b$ \&     \& $a$ \&     \& $b$\& \& \ \ \ \  \ \ ($\gamma(\modif{5}{q_5})=(\modif{1}{q_1},2)$)\\
                     \& $b$ \& $b$ \& $\mathbf{\modif{1}{q_1}}$ \& $b$ \& $a$ \&     \& $a$ \&     \& $b$ \&     \& $a$ \&     \& $b$ \&\& \ \ \ \ \ \ ($\delta(\modif{1}{q_1},ba)=\modif{5}{q_5}$)\\
                                 \& $b$ \& $b$ \&     \& $b$ \& $a$ \& $\mathbf{\modif{5}{q_5}}$ \& $a$ \&     \& $b$ \&     \& $a$ \&     \& $b$ \&\& \ \ \ \ \ \ ($\gamma(\modif{5}{q_5})=(\modif{1}{q_1},2)$)\\
            \& $b$ \& $b$ \&     \& $b$ \& $a$ \& $\mathbf{\modif{1}{q_1}}$ \& $a$ \&     \& $b$ \&     \& $a$ \&     \& $b$ \&\& \ \ \ \ \ ($\delta(\modif{1}{q_1},ab)=\modif{4}{q_4}$)\\
      };
      \node [inner sep=0pt,draw, rectangle, fit= (mat-1-2) (mat-1-3)] {};
      \node [inner sep=0pt,draw, rectangle, fit= (mat-2-2) (mat-2-3)] {};
      \path [->] (mat-2-4)   edge [dotted,thin,>=latex]  node {} (mat-3-4);
      \node [inner sep=0pt,draw, rectangle, fit= (mat-3-5) (mat-3-6)] {};
      \node [inner sep=0pt,draw, rectangle, fit= (mat-4-5) (mat-4-6)] {};
      \path [->] (mat-4-7)   edge [dotted,thin,>=latex]  node {} (mat-5-7);
      \node [inner sep=0pt,draw, rectangle, fit= (mat-5-8) (mat-5-10)] {};
    \end{tikzpicture}
  }
  
  \centerline{
    \begin{tikzpicture}
      \matrix (mat) [matrix of nodes,ampersand replacement=\&,row sep=0.5cm]{
            \& $b$ \& $b$ \&     \& $b$ \& $a$ \&     \& $a$ \&     \& $b$ \& $\mathbf{\modif{4}{q_4}}$ \& $a$ \&     \& $b$ \&\& ($\gamma(\modif{4}{q_4})=(\modif{6}{q_6},1)$)\\
           \& $b$ \& $b$ \&     \& $b$ \& $a$ \&     \& $a$ \& $\mathbf{\modif{6}{q_6}}$ \& $b$ \&     \& $a$ \&     \& $b$ \&\& ($\delta(\modif{6}{q_6},ba)=\modif{8}{q_8}$)\\
           \& $b$ \& $b$ \&     \& $b$ \& $a$ \&     \& $a$ \&     \& $b$ \&     \& $a$ \& $\mathbf{\modif{8}{q_8}}$ \& $b$ \&\& ($\gamma(\modif{8}{q_8})=(\modif{6}{q_6},2)$)\\
                      \& $b$ \& $b$ \&     \& $b$ \& $a$ \&     \& $a$ \&     \& $b$ \&     \& $a$ \& $\mathbf{\modif{6}{q_6}}$ \& $b$ \&\ \& ($\delta(q_6,b)=q_7$)\\\
           \& $b$ \& $b$ \&     \& $b$ \& $a$ \&     \& $a$ \&     \& $b$ \&     \& $a$ \&     \& $b$ \& $\mathbf{\modif{7}{q_7}}$ \& \ \ \ \ \ \ ($\modif{7}{q_7}\in F$ $\Rightarrow$ $w\in L(S)$)\\
      };
      \node [inner sep=0pt,draw, rectangle, fit= (mat-1-8) (mat-1-10)] {};
      \path [->] (mat-1-11)   edge [dotted,thin,>=latex]  node {} (mat-2-9);
      \node [inner sep=0pt,draw, rectangle, fit= (mat-2-10) (mat-2-12)] {};
      \node [inner sep=0pt,draw, rectangle, fit= (mat-3-10) (mat-3-12)] {};
       \node [inner sep=0pt,draw, rectangle, fit= (mat-4-14) (mat-4-15)] {};
      \path [->] (mat-3-13)   edge [dotted,thin,>=latex]  node {} (mat-4-13);
    \end{tikzpicture}
  }
  
\end{example}

  During the traversal of a QDS, the computation of the associated path needs to perform some shifts in the input window: if a transition $(p_j,x_j,p_{j+1})$ belongs to \modif{$\Gamma$}{$\gamma$}, a shift can be performed only if \textbf{(1)} there exist enough symbols in the input window, \textbf{(2)} there exist enough remaining symbols on the path, \textbf{(3)} these symbols match, and \textbf{(4)} for the last shift there is at least one symbol to be read after the  matching symbols. These constraints are formally defined in Definition \ref{def shiftable}.
  
  \begin{definition}\label{def shiftable}
    Let $t=((q_1,x_1,q_2),\ldots,(q_n,x_n,q_{n+1}))$ be a path of a QDS $S=(\Sigma,m, \Gamma, {\cal Q}$, $0$, $F$, $\delta,\gamma)$. The path $t$ is \emph{shiftable} if for any 
    edge $(q_j,l,q_{j+1})\in\gamma$ of the path $t$, 
  \begin{enumerate}
  \item[ \emph{\textbf{(1)}}] $m-l\leq j$,
  \item[ \emph{\textbf{(2)}}] $m-l\leq n-j$,
  \item[ \emph{\textbf{(3)}}] $\displaystyle\bigcdot_{i=j+1-(m-l)}^ {j-1}x_i= \bigcdot_{i=j+1}^ {j-1+(m-l)}x_i$,
    \item[ \emph{\textbf{(4)}}]  If $(q_i,x_i,q_{i+1})\in \delta$  for all $j< i\leq n$    then  $n+1-j>m-l$.

    \end{enumerate}
  Moreover, the \emph{$\Sigma$-label} $w=\displaystyle\bigcdot_{i=1}^ {n}y_i$ of the shiftable path $t$ is defined by
   
  $y_i=\left\{\begin{array}{ll}
\varepsilon &  \text{if } i\in [j,j+m-l-1]\mbox{ for } (q_j,l,q_{j+1})\in \gamma \\
x_i & \text{otherwise}
\end{array}
\right.$ 

\end{definition}
Notice that, by convention, we set $\displaystyle\bigcdot_{i=j}^kx_i=\varepsilon$ if $j>k$. It is the case for  \textbf{(3)} if $l=m-1$.
  We also consider that there exists a shiftable empty path $t=(q,\varepsilon, q)$ from any state $q$ to itself.
  
  \begin{example}\label{ex exp chemin decal}
    Let us consider the path $(t_1,\ldots , t_{11})$ labelled by $w=ba2ab1ba2aa$ of the QDS of Figure \ref{fig ex qds graph}.
This path is shiftable since  the conditions are checked for every $\gamma$-transition:
 \begin{enumerate}
\item for $t_3=(q_5,2,q_1)$, we have \emph{\textbf{(1)}} $m-l=1\leq j=3$, \emph{\textbf{(2)}} $m-l=1\leq n-j=8$, \emph{\textbf{(3)}} $w[3,2]=\varepsilon=w[4,3]$, \emph{\textbf{(4)}} $t_3$ is not the last $\gamma$-transition,
\item for $t_6=(q_4,1,q_6)$, we have \emph{\textbf{(1)}} $m-l=2\leq j=6$, \emph{\textbf{(2)}} $m-l=2\leq n-j=5$, \emph{\textbf{(3)}} $w[5,5]=b=w[7,7]$, \emph{\textbf{(4)}} $t_6$ is not the last $\gamma$-transition,
\item for $t_9=(q_8,2,q_6)$, we have \emph{\textbf{(1)}} $m-l=1\leq j=9$, \emph{\textbf{(2)}} $m-l=1\leq n-j=2$, \emph{\textbf{(3)}} $w[9,8]=\varepsilon=w[10,9]$, \emph{\textbf{(4)}} as $t_9$ is  the last $\gamma$-transition, $n+1-j=3>m-l=1$.
 \end{enumerate}
  \end{example}

 \modif{}{Finally, the notion of successful path is easily extensible to QDS once the notion of shiftability is stated}.
  \begin{definition}\label{def successful}
    Let $t=((q_1,x_1,q_2),\ldots,(q_n,x_n,q_{n+1}))$ be a path of a QDS $S=(\Sigma, m, \Gamma, {\cal Q},0,F,$ $\delta,\gamma)$. The path $t$ is \emph{successful} if 
  \begin{itemize}
  \item $t$ is shiftable,
  \item $q_1=0$,
  \item $q_{n+1}\in F$.
  \end{itemize}
\end{definition}


As for automata, the language recognized by a QDS can be defined with respect to the notion of \emph{successful path}
as stated by the next lemma and its corollaries.
\begin{lemma}
Let $S=(\Sigma, m, \Gamma, {\cal Q},0,F,\delta,\gamma)$ be a quasi-deterministic structure, $q_1\in Q_1$, $w\in \Sigma ^*$, $q\in {\cal Q}$. The two following conditions are equivalent:
\begin{itemize}
\item $\Delta (q_1,w)=q$
\item the word $w$ is the $\Sigma$-label of  a shiftable path from $q_1$ to $q$. 
\end{itemize}
\end{lemma}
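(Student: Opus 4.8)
The plan is to prove the equivalence by induction on $|w|$, mirroring the recursive definition of $\Delta$ (Definition~\ref{def ext delta}), the recursive clause of which strictly reduces the word length by $l\ge 1$. The base case is $|w|\le m-1$: here $\Delta(q_1,w)=\delta(q_1,w)$, and since $\delta$ increases the layer index by one at every step, a path realizing this value is a pure $\delta$-path with no $\gamma$-edge, which is vacuously shiftable (the shiftable condition quantifies only over $\gamma$-edges) and whose label is exactly $w$. Conversely a shiftable path with no shift starting in $Q_1$ can have at most $m-1$ edges and must be a $\delta$-path, so in this length range both conditions coincide with $\delta(q_1,w)=q$.

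The key structural observation, used throughout the inductive step, is that in \emph{any} shiftable path starting in $Q_1$ a $\gamma$-edge can occur only at a position that is a multiple of $m$. Indeed, $\gamma$ is defined only on $Q_m$, while each $\delta$-edge raises the layer by one and each $\gamma$-edge returns to $Q_1$; hence every shift is preceded by exactly $m-1$ consecutive $\delta$-edges. In particular the first shift, if any, sits at position $m$, which is precisely the point at which $\Delta$ consumes $w[1,m-1]$ through $\delta$ before applying $\gamma$. This alignment is what lets the induction close.

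For the inductive step ($|w|\ge m$) write $\gamma(\delta(q_1,w[1,m-1]))=(q',l)$. In the forward direction I take the shiftable path $P'$ from $q'$ labelled by $w[l+1,|w|]$ supplied by the induction hypothesis and prepend to it the $m-1$ $\delta$-edges reading $w[1,m-1]$ together with the $\gamma$-edge $(q_m,l,q')$; I then verify the three shiftability conditions for this new $\gamma$-edge at position $m$ (the first two follow directly from $l\ge 1$ and the length bound $|w|\ge m$, and the matching condition holds because, by the structural observation, the first $m-1-l$ edges of $P'$ are $\delta$-edges reading $w[l+1],\dots,w[m-1]$). In the backward direction I split the given path at its first shift, which by the structural observation is at position $m$, and show that the suffix subpath is itself shiftable: conditions (1)--(3) transfer verbatim, since all remaining shifts sit at multiples of $m$ and therefore always have the required $m-1$ preceding edges, so their index ranges survive the truncation unchanged. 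I then apply the induction hypothesis to this suffix.

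The main obstacle is the label bookkeeping that glues the two directions, and it is localized in one place: the $\varepsilon$-region of the first shift. Read standalone, the label of $P'$ begins with the $m-1-l$ overlap symbols, but inside the full path those same edges fall in the suppressed window $[\,m,\,2m-1-l\,]$ of the shift and contribute $\varepsilon$. Using the matching condition $x_{l+1}\cdots x_{m-1}=x_{m+1}\cdots x_{2m-1-l}$ I identify these overlap symbols as exactly $w[l+1],\dots,w[m-1]$, so that the label of $P$ equals $w[1,m-1]$ followed by the label of $P'$ with its first $m-1-l$ symbols removed, which is exactly $w$. Run backwards, the same computation shows that $\mathrm{label}(P)=w$ forces $\mathrm{label}(P')=w[l+1,|w|]$, once one first notes that shiftability condition (2) for the initial shift forces $|w|\ge m$ so that the recursive clause of $\Delta$ genuinely applies. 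With the labels reconciled, invoking $\Delta(q_1,w)=\Delta(q',w[l+1,|w|])$ closes the induction in both directions.
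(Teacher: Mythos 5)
Your proof is correct and follows essentially the same route as the paper's: induction on $|w|$, with the base case $|w|\le m-1$ reducing to $\delta$, and the inductive step decomposing the path at the first $\gamma$-edge and using the matching condition to identify the first $m-1-l$ symbols of the suffix path's label with $w[l+1],\dots,w[m-1]$. You simply make explicit several points the paper leaves implicit (the position-$m$ alignment of shifts, the transfer of the three shiftability conditions, and the $\varepsilon$-suppression bookkeeping), which is a welcome tightening rather than a different argument.
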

\begin{proof}

 The proof  is done by \modif{induction}{recurrence} on the length of $w$. 
 \begin{enumerate}
   \item For $|w|\leq m-1$\\    
   \centerline{$\begin{array}{lll}\Delta(q_1,w)=q& \Leftrightarrow &\delta(q_1,w)=q\\
   &\Leftrightarrow&
  \mbox{there exists a shiftable path labelled by $w$ from $q_1$ to $q$}
  \end{array}$}
    \smallskip
   \item Let us now consider that $|w|>m$. We have $\Delta(q_1,w)= \Delta(q',w[l+1,|w|])$ where $\gamma(\delta(q,w[1,m-1]))=(q',l)$. By the induction hypothesis, $\Delta(q',w[l+1,|w|])=q$ if and only if the word $w[l+1,|w|]$ is the $\Sigma$-label of a shiftable path from $q'$ to $q$.    
 Let $((q',x_1,p_2)\ldots (p_s,x_s,q))$ be this path. Necessarily, the beginning of the label of this path  is $x_1\cdots x_{m-l-1}=w_{l+1}\cdots w_{m-1}$. 
 Hence, 
 $\Delta(q_1,w)=q$ $\Leftrightarrow$ $\Delta(q_1,w[1,m-1])=q_m \ \wedge \modif{\Gamma}{\gamma}(q_m)=(q',l)\ \wedge \Delta(q',w[l+1,|w|])=q$ 
 $\Leftrightarrow$ there exists a path  $(q_1,w[1],q_2)\ldots (q_{m-1},w[m-1],q_{m})$, a transition  $(q_{m},l,q')$ and a shiftable path from $q'$ to $q$ labelled by $w_{l+1}\cdots w_{m-1}$ 
 $\Leftrightarrow$ there exists a shiftable path from $q_1$ to $q$ labelled by $w$.
 \end{enumerate}
\end{proof}

\begin{corollary}\label{Cor-1}
A word is recognized by  a quasi-deterministic structure if and only if it is the $\Sigma$-label of a successful path.
\end{corollary}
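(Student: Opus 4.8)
The plan is to derive this directly from the preceding lemma, since the corollary is essentially that equivalence instantiated at the initial state, together with an existential quantification over the final states. First I would unfold the meaning of ``recognized'': by Definition~\ref{def lang qds}, a word $w$ is recognized by $S$ precisely when $\Delta(i,w)\in F$, that is, when there exists a state $q\in F$ with $\Delta(i,w)=q$.

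Next I would apply the lemma with $q_1=i$, which is legitimate because the initial state $i$ lies in $Q_1$, as required by the hypotheses of the lemma. The lemma yields, for each target $q$, that $\Delta(i,w)=q$ if and only if $w$ labels a shiftable path from $i$ to $q$. Taking the disjunction over all $q\in F$, it follows that $\Delta(i,w)\in F$ if and only if there is some $q\in F$ such that $w$ labels a shiftable path from $i$ to $q$.

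Finally I would match this condition against Definition~\ref{def successful}: a path whose first state is $i$, which is shiftable, and whose last state $q$ belongs to $F$ is exactly a \emph{successful} path. Hence $w$ labels a shiftable path from $i$ to some final state if and only if $w$ labels a successful path, and chaining the three equivalences establishes the corollary.

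There is no genuine obstacle here; the only point deserving a moment of care is checking that the hypotheses of the lemma remain satisfied when we specialize $q_1$ to $i$ and let the target range over $F\subset\bigcup_{j\in\{1,\ldots,m\}}Q_j$, so that the endpoint of a successful path may lie in any layer $Q_j$ rather than only in $Q_1$. Since the lemma is already stated for an arbitrary target $q\in Q_j$ with $j\in\llbracket 1,m\rrbracket$, this causes no difficulty, and the argument is a routine specialization.
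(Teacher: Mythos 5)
Your proof is correct and follows exactly the route the paper intends: the paper states this corollary as an immediate consequence of the preceding lemma, and your argument (specializing the lemma to $q_1=i$, quantifying over $q\in F$, and matching the result against Definition~\ref{def successful}) is precisely that routine specialization, spelled out. Nothing is missing; the observation that $i\in Q_1$ and that final states may lie in any layer is the right point of care, and it is handled correctly.
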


Finally, let us show how to determine whether a given word is recognized by a given quasi-deterministic structure (see Example~\ref{ex membership prob qds}).

\begin{algorithm}[H]
  \caption{Membership Test for Quasi-Deterministic Structure}
  \label{algo qds}
  \begin{algorithmic}[1]
    \REQUIRE $S=(\Sigma, m, \Gamma, {\cal Q},0,F,\delta,\gamma)$ a quasi deterministic structure, $w$ a word in $\Sigma^*$ 
    \ENSURE Returns $w\in L(S)$ 
    \IF{$|w|\leq m-1$}
      \RETURN $\delta(0,w)\in F$
    \ENDIF
    \STATE $q$ $\leftarrow$ $0$
    \STATE $w'$ $\leftarrow$ $w$
    \WHILE{$|w'|> m-1\ \wedge\ q\neq\bot$}
      \STATE $(q,j)$ $\leftarrow$ $\gamma(\delta(q,w'[1,m-1]))$
      \STATE $w'$ $\leftarrow$ $w'[j+1,|w'|]$
    \ENDWHILE    
    \RETURN $q\neq\bot\ \wedge\ \delta(q,w')\in F$ 
  \end{algorithmic}
\end{algorithm}

\begin{proposition}\label{prop algo mb ok}
  \modif{}{Let $w$ be a word in $\Sigma^*$ and $S=(\Sigma, m, \Gamma, {\cal Q},0,F,\delta,\gamma)$ be a QDS.}
  Algorithm~\ref{algo qds} returns \modif{TRUE if and only if}{the boolean} $w\in L(S)$. Furthermore, its execution always halts, and is performed in time $O((m-1)\times \frac{|w|}{s})$, $s=\mathrm{min}\{j\mid \exists q\in Q_m, \gamma(q)=(j,p)\}$ and in space $O(1)$.\end{proposition}
\begin{proof}
  Correctness is trivially proved from Definition~\ref{def lang qds} and Definition~\ref{def ext delta}. Space Complexity is constant since the only informations needed are the current state and the next portion of the word. Finally, time complexity is due to the loop from line 6 to line 9: the shift in $w'$ is at least equal to $\mathrm{min}\{j\mid \exists q\in Q_{m-1}, \gamma(q)=(j,p)\}$ and the computation of $\delta(q,w[1,m-1])$ can be performed in $O(m-1)$.\modif{ Hence the announced complexity.}{}  
  
\end{proof}

\modif{N}{The n}ext section is devoted to the conversion of a $(k,l)$-unambiguous NFA into a quasi-deterministic structure.

\section{From a (k,l)-unambiguous NFA to a quasi-deterministic structure}\label{se:klnanfa2qds}

For any $(k,l)$-unambiguous automaton, given a state $q$ and a word $w$ of length $k$, there exists an integer $i\leq l$ \modif{such that there exists}{and} at most one state $q'$ in $\delta(q,w[1,i])$ such that $\delta(q',w[i+1,k])$ is not empty. \modif{}{Assume that such an $i$ is taken as large as possible.} The integer $i$ is called the \emph{step index of} $q$ with respect to $w$ and is denoted by  \emph{$\mathrm{StepIndex}_w(q)$}. The state $q'$ is called the \emph{step successor} of $q$ with respect to $w$ and is denoted by \emph{$\mathrm{StepSucc}_w(q)$}.

Quasi-deterministic structures can be used in order to simulate each run in a unique way. For any pair $(q,w)$, $\mathrm{StepIndex}_w(q)$  and $\mathrm{StepSucc}_w(q)$ can be precomputed; then the run can restart in $\mathrm{StepSucc}_w(q)$ with a word $w'$ that is a suffix of $w$.

  \begin{example}\label{ex:calculstep} 
    Let $\Sigma=\{a,b\}$. Let $A$ be the automaton of Figure~\ref{fig ex aut klna} that denotes the language $\Sigma^*\cdot \{a\}\cdot \Sigma$. It can be shown that the automaton $A$ is a $(3,1)$-unambiguous NFA. As an example let us consider the state $\modif{1}{q_1}$. For any word $w$ in $\Sigma^3$ we have:
      $$|
      \{q'\in Q : q'\in\delta(\modif{1}{q_1},w[1,1])\ \wedge\ \delta(q',w[2,3])\neq\emptyset\}
    |\leq 1,$$

  \noindent \emph{i.e.} for any word $w$ in $\Sigma^3$, $\mathrm{StepIndex}_w(\modif{1}{q_1})=1$ and $\mathrm{StepSucc}_w(\modif{1}{q_1})=\modif{1}{q_1}$.
  
%
%
%
%
   
  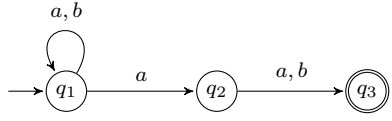
\begin{figure}[H]
    \centerline{  
      \begin{tikzpicture}[node distance=2cm,bend angle=30]   
	    \node[initial,state] (1) {$\modif{1}{q_1}$};  
	    \node[state] (2) [right of=1] {$\modif{2}{q_2}$};
	    \node[state,accepting] (3) [right of=2] {$\modif{3}{q_3}$};
	    \path[->]
	      (1)   edge  node {$a$} (2)
	      (2)   edge  node {$a,b$} (3)
	      (1)   edge [swap,in=120,out=60,loop] node {$a,b$} ();	            
      \end{tikzpicture}     
    }   
    \caption{The automaton $A$.}
    \label{fig ex aut klna}
  \end{figure}
  \end{example}


  The computation of the pairs $(\mathrm{StepIndex}_w(q),\mathrm{StepSucc}_w(q))$ for any pair $(q,w)$ of a state and a word is sufficient to compute a quasi-deterministic structure. 
  \modif{}{Indeed, for any state $q$, reading $k$ symbols $w=w_1\cdots w_k$ is enough to compute without ambiguity the unique successor $\mathrm{StepSucc}_w(q)$. Hence defining the states of the QDS as couples (state,word) is sufficient.}
  The quasi-deterministic structure is exponentially bigger with respect to the size of the alphabet than the automaton. That has to be compared with the exponential growth with respect to the number of states in the classical determinization.

\begin{definition}\label{def qds}
  Let $A=(\Sigma,Q,\{i\},F,\delta)$ be a $(k,l)$-unambiguous automaton. The \emph{quasi-deterministic structure associated with} $A$ is $S=(\Sigma, k+1, \Gamma, {\cal Q},$ $0,F',$ $\delta',\gamma')$ where:
  \begin{itemize}
  \item $\Gamma \subset [1,k]$
    \item $\forall j\in[1,k+1]$, $Q_j=\modif{\{(q,w) \mid q\in Q\ \wedge\ w\in\Sigma^{j-1}\}}{Q\times \Sigma^{j-1}}$,
    \item $0=(i,\varepsilon)$,
    \item $F'=\{(q,w) \mid \delta(q,w)\cap F\neq\emptyset\}$,
    \item 
    
    $\delta'((q,w),a)=
        \left\{
          \begin{array}{l@{\ }l}
            (q,w\cdot a) & \text{ if } \delta(q,w\cdot a)\in Q,\\
            \bot & \text{ otherwise,}\\
          \end{array}
        \right.$
    
    \item $\forall (q,w)\in Q_{k+1}$, $\gamma'((q,w))=(\mathrm{StepIndex}_w(q),(\mathrm{StepSucc}_w(q),\varepsilon))$.
  \end{itemize}
\end{definition}

\modif{}{Let us show that the QDS associated with any $(k,l)$-automaton $A$ is exponential w.r.t. the size of the alphabet and recognizes $L(A)$. First, as a direct consequence of Definition~\ref{def qds}, the following proposition holds.}

\begin{proposition}\label{prop taille qds associ}
  Let $A=(\Sigma,Q,I,F,\delta)$ be a $(k,l)$-unambiguous automaton and $S$ be the quasi-deterministic structure associated with $A$. Then the number of states of $S$ is $|Q| \times \frac{|\Sigma|^{k+1}-1}{|\Sigma|-1}$.
\end{proposition}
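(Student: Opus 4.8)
The plan is to count the states of $S$ directly, layer by layer, and then evaluate a geometric sum. By Definition~\ref{def qds}, the state set of $S$ is the union $\bigcup_{j\in\{1,\ldots,k+1\}} Q_j$, where each layer is $Q_j=\{(q,w)\mid q\in Q\ \wedge\ w\in\Sigma^{j-1}\}$. Since $\mathcal{Q}=(Q_j)_{j\in\{1,\ldots,k+1\}}$ is by definition a family of pairwise disjoint sets of states (and indeed pairs living in distinct layers carry words of distinct lengths), the total number of states is exactly $\sum_{j=1}^{k+1}\mathrm{Card}(Q_j)$, with no risk of double-counting.

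First I would compute the cardinality of a single layer. A state of $Q_j$ is a pair $(q,w)$ where $q$ ranges over $Q$ and $w$ ranges over the words of length $j-1$ over $\Sigma$; as there are exactly $\mathrm{Card}(\Sigma)^{j-1}$ such words, one obtains $\mathrm{Card}(Q_j)=\mathrm{Card}(Q)\times\mathrm{Card}(\Sigma)^{j-1}$.

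Summing over all layers and factoring out the common factor $\mathrm{Card}(Q)$ gives $\mathrm{Card}(Q)\sum_{j=1}^{k+1}\mathrm{Card}(\Sigma)^{j-1}=\mathrm{Card}(Q)\sum_{i=0}^{k}\mathrm{Card}(\Sigma)^{i}$ after the reindexing $i=j-1$. Applying the closed form for a geometric sum yields $\sum_{i=0}^{k}\mathrm{Card}(\Sigma)^{i}=\frac{\mathrm{Card}(\Sigma)^{k+1}-1}{\mathrm{Card}(\Sigma)-1}$, which is precisely the announced value.

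There is no genuine obstacle here: the statement is a routine counting argument resting on the geometric series identity. The only two points deserving a moment of care are confirming the disjointness of the layers $Q_j$ (so that the cardinality of the union is the sum of the cardinalities), and noting that the closed form requires the denominator $\mathrm{Card}(\Sigma)-1$ to be nonzero, i.e.\ $\mathrm{Card}(\Sigma)\geq 2$; in the degenerate unary case the sum is simply $k+1$ and the stated fraction is understood as its limiting value.
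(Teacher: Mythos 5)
Your proof is correct and is essentially the paper's own argument: the paper simply states that the count follows ``trivially'' from Definition~\ref{def qds}, and your layer-by-layer count with the geometric sum is exactly the routine computation that remark leaves implicit. Your added observation about the unary case $\mathrm{Card}(\Sigma)=1$ (where the closed-form fraction must be read as the sum $k+1$) is a small point of care the paper glosses over, but it does not change the nature of the argument.
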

\modif{\begin{proof}
  Trivially according to Definition~\ref{def qds}.
  
\end{proof}}{}

\begin{proposition}\label{prop langage qds}
  Let $A$ be a $(k,l)$-unambiguous automaton and $S$ be the quasi-deterministic structure associated with $A$. Then\modif{:}{ $L(S)=L(A)$.}

  \centerline{\modif{$L(S)=L(A)$.}{}}
\end{proposition}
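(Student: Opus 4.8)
The plan is to prove, by induction on $|w|$, the stronger statement that for every state $q$ in $Q$ and every word $w$ in $\Sigma^*$,
\[
\Delta((q,\varepsilon),w)\in F' \iff \delta(q,w)\cap F\neq\emptyset .
\]
Taking $q=i$ and recalling that $i'=(i,\varepsilon)$, this statement applied to the initial state immediately yields $L(S)=\{w\mid\Delta(i',w)\in F'\}=\{w\mid\delta(i,w)\cap F\neq\emptyset\}=L(A)$, which is the claim. Generalizing to an arbitrary starting state $q$ is what makes the induction go through, since the recursive unfolding of $\Delta$ restarts the computation in the step successors, which may be any states of $A$.

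For the base case $|w|\leq k$ (recall that $m=k+1$ here), Definition~\ref{def ext delta} gives $\Delta((q,\varepsilon),w)=\delta'((q,\varepsilon),w)$, and since $\delta'$ merely appends letters we have $\delta'((q,\varepsilon),w)=(q,w)\in Q_{|w|+1}$; by the definition of $F'$ in Definition~\ref{def qds}, $(q,w)\in F'$ iff $\delta(q,w)\cap F\neq\emptyset$, so the equivalence holds. For the inductive step $|w|\geq k+1$, writing $u=w[1,k]$, $j=\mathrm{StepIndex}_u(q)$ and $q'=\mathrm{StepSucc}_u(q)$, Definition~\ref{def ext delta} together with $\gamma'((q,u))=((q',\varepsilon),j)$ gives $\Delta((q,\varepsilon),w)=\Delta((q',\varepsilon),w[j+1,|w|])$. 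Since the step index satisfies $1\leq j\leq l\leq k$, the remaining word $w[j+1,|w|]$ is strictly shorter than $w$, so the induction hypothesis applies: $\Delta((q',\varepsilon),w[j+1,|w|])\in F'$ iff $\delta(q',w[j+1,|w|])\cap F\neq\emptyset$.

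It then remains to prove the key equivalence
\[
\delta(q',w[j+1,|w|])\cap F\neq\emptyset \iff \delta(q,w)\cap F\neq\emptyset,
\]
which is the heart of the argument and the step I expect to be the main obstacle. The right-to-left direction is immediate once $q'\neq\bot$, because $q'\in\delta(q,w[1,j])$ and $\delta(q,w)=\delta(\delta(q,w[1,j]),w[j+1,|w|])$. For the left-to-right direction, any state $p\in\delta(q,w[1,j])$ contributing a final state to $\delta(q,w)$ must satisfy $\delta(p,w[j+1,|w|])\neq\emptyset$, hence $\delta(p,w[j+1,k])\neq\emptyset$ because $w[j+1,k]$ is a prefix of $w[j+1,|w|]$ (here $|w|\geq k$). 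Since $u[1,j]=w[1,j]$ and $u[j+1,k]=w[j+1,k]$, Definition~\ref{def stepsucc} and the $(k,l)$-unambiguity of $A$ force the set $\{p\in\delta(q,u[1,j])\mid\delta(p,u[j+1,k])\neq\emptyset\}$ to contain at most one element, namely $q'=\mathrm{StepSucc}_u(q)$; consequently any such $p$ equals $q'$, and $q'\neq\bot$. This pins all accepting runs of $A$ from $q$ on $w$ to pass through the single step successor $q'$, establishing the equivalence.

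Finally, I would dispatch the degenerate case $q'=\bot$ (the step-successor set is empty): then $\delta(q,w)\cap F=\emptyset$ on the one side, while on the QDS side the conventions $\delta(\bot,a)=\bot$ and $\gamma(\bot)=(\bot,1)$ propagate a $\bot\notin F'$, so both sides of the invariant are false and the equivalence still holds. Assembling the base case, the unfolding of $\Delta$, and the key equivalence completes the induction and hence the proof that $L(S)=L(A)$.
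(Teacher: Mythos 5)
Your proof is correct and follows essentially the same route as the paper's: an induction on $|w|$ over the generalized invariant $\Delta((q,\varepsilon),w)\in F' \Leftrightarrow \delta(q,w)\cap F\neq\emptyset$ for all states $q\in Q$, with the same base case $|w|\leq k$ and the same inductive unfolding of $\Delta$ via $\gamma'$ and the step index/successor. In fact your treatment of the key equivalence $\delta(q',w[j+1,|w|])\cap F\neq\emptyset \Leftrightarrow \delta(q,w)\cap F\neq\emptyset$ (invoking $(k,l)$-unambiguity to pin every accepting run through the unique step successor, and handling the $\mathrm{StepSucc}_u(q)=\bot$ case) is more explicit than the paper's one-line assertion; the only blemish is that you swapped the names \emph{left-to-right} and \emph{right-to-left} for its two directions, though each argument is attached to the correct claim.
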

\begin{proof}
  Let $A=(\Sigma,Q,\{i\},F,\delta)$, $S=(\Sigma, k+1, \Gamma, \mathcal{Q},0,F',\delta',\gamma')$ and $\Delta$ be the extended transition function of $S$ (See Definition \ref{def ext delta}). Let $w$ be a word in $\Sigma^*$. Let us show by recurrence on the length of $w$ that:
  
  \centerline{
    \begin{tabular}{cp{3cm}}
      $\forall q\in Q$, $\Delta((q,\varepsilon),w)\in F'$ $\Leftrightarrow$ $\delta(q,w)\cap F\neq\emptyset$ &  \hfill(\textbf{P1})\\
    \end{tabular}
  }
  Suppose that $|w|\leq k$. We have $\Delta((q,\varepsilon),w)=\delta'((q,\varepsilon),w)$. According to Definition~\ref{def qds}, \textbf{(a)} $\delta'((q,\varepsilon),w)=(q,w)$ and \textbf{(b)} $(q,w)\in F'$ $\Leftrightarrow$ $\delta(q,w)\cap F\neq \emptyset$. (\textbf{P1}) is satisfied.
  
  Let us suppose that  (\textbf{P1}) is satisfied for any word $w$ such that  $k<|w|<n$. Suppose now that $|w|=n$. Either \textbf{(Case I)} $\Delta((q,\varepsilon),w)=\Delta((p,\varepsilon),w[j+1,n])$ if $\gamma'(\delta'((q,\varepsilon),w[1,k]))=(j,(p,\varepsilon)) \ \wedge\ p\neq\bot$ or \textbf{(Case II)} $\Delta((q,\varepsilon),w)=\bot$. \textbf{Case II} implies that $\delta(q,w)=\emptyset$ and consequently $w$ is neither in $L(A)$ nor in $L(S)$. Suppose that \textbf{Case I} holds. By the recurrence hypothesis, $\Delta((p,\varepsilon),w[j+1,n])\in F'$ $\Leftrightarrow$ $\delta(p,w[j+1,n])\cap F\neq\emptyset$. Since $\gamma\modif{}{'}(\delta'((q,\varepsilon),w[1,k]))=(j,(p,\varepsilon))$, 
 we have    $\mathrm{StepIndex}_{w[1,k]}((q,\varepsilon))=j$ and $\mathrm{StepSucc}_{w[1,k]}((q,\varepsilon))=(p,\varepsilon)$ which implies $p\in \delta(q,w[1,j])$. As a consequence $\delta(p,w[j+1,n])\cap F\neq\emptyset$ $\Leftrightarrow$ $\delta(q,w)\cap F\neq\emptyset$. Finally $\Delta((q,\varepsilon),w)\in F'$ $\Leftrightarrow$ $\delta(q,w)\cap F\neq\emptyset$ and (\textbf{P1}) holds.
  
  As a conclusion, (\textbf{P1}) holds for $q=i$ and since for all $w$ in $\Sigma^*$, $\Delta((i,\varepsilon),w)\in F'$ $\Leftrightarrow$ $\delta(i,w)\cap F\neq\emptyset$, equality of languages holds.
  
\end{proof}

  
  \begin{example}\label{ex:quasidetstruct}
   Let us consider the automaton $A$ defined in Example~\ref{ex:calculstep}. After removing unreachable states, the quasi-deterministic structure associated with $A$ is given in Figure~\ref{fig qdet asso ex}.
  \end{example}
  
  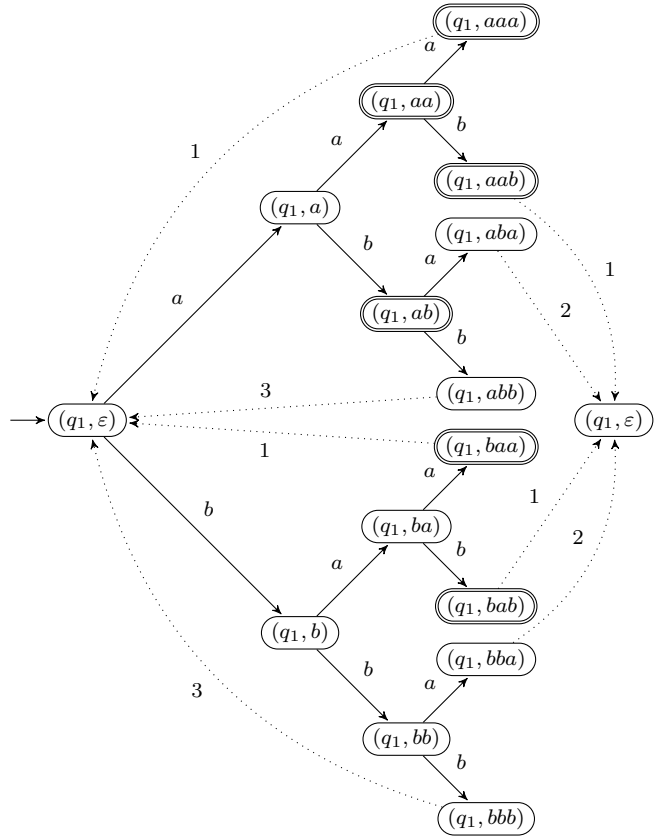
\begin{figure}[H]
    \centerline{  
      \begin{tikzpicture}[node distance=2cm,bend angle=30]   
	    \node[initial,state,rounded rectangle] (1eps) {$(\modif{1}{q_1},\varepsilon)$}; 
	    \node[state,rounded rectangle,node distance=7cm,right of=1eps] (1eps2) {$(\modif{1}{q_1},\varepsilon)$};   
	    \node[node distance=4cm,state,rounded rectangle,above right of=1eps] (1a) {$(\modif{1}{q_1},a)$};  
	    \node[node distance=4cm,state,rounded rectangle,below right of=1eps] (1b) {$(\modif{1}{q_1},b)$};  
	    \node[state,rounded rectangle,above right of=1a,accepting] (1aa) {$(\modif{1}{q_1},aa)$};   
	    \node[state,rounded rectangle,below right of=1a,accepting] (1ab) {$(\modif{1}{q_1},ab)$};
	    \node[node distance=1.5cm,state,rounded rectangle,above right of=1aa,accepting] (1aaa) {$(\modif{1}{q_1},aaa)$};   
	    \node[node distance=1.5cm,state,rounded rectangle,below right of=1aa,accepting] (1aab) {$(\modif{1}{q_1},aab)$};
	    \node[node distance=1.5cm,state,rounded rectangle,above right of=1ab] (1aba) {$(\modif{1}{q_1},aba)$};   
	    \node[node distance=1.5cm,state,rounded rectangle,below right of=1ab] (1abb) {$(\modif{1}{q_1},abb)$};
	    \node[state,rounded rectangle,above right of=1b] (1ba) {$(\modif{1}{q_1},ba)$};   
	    \node[state,rounded rectangle,below right of=1b] (1bb) {$(\modif{1}{q_1},bb)$};	    \node[node distance=1.5cm,state,rounded rectangle,above right of=1ba,accepting] (1baa) {$(\modif{1}{q_1},baa)$};   
	    \node[node distance=1.5cm,state,rounded rectangle,below right of=1ba,accepting] (1bab) {$(\modif{1}{q_1},bab)$};
	    \node[node distance=1.5cm,state,rounded rectangle,above right of=1bb] (1bba) {$(\modif{1}{q_1},bba)$};   
	    \node[node distance=1.5cm,state,rounded rectangle,below right of=1bb] (1bbb) {$(\modif{1}{q_1},bbb)$}; 
	    \path[->]
	      (1eps)   edge  node {$a$} (1a)
	      (1eps)   edge  node {$b$} (1b)
	      (1a)   edge  node {$a$} (1aa)
	      (1a)   edge  node {$b$} (1ab)
	      (1b)   edge  node {$a$} (1ba)
	      (1b)   edge  node {$b$} (1bb)
	      (1aa)   edge  node {$a$} (1aaa)
	      (1aa)   edge  node {$b$} (1aab)
	      (1ab)   edge  node {$a$} (1aba)
	      (1ab)   edge  node {$b$} (1abb)
	      (1ba)   edge  node {$a$} (1baa)
	      (1ba)   edge  node {$b$} (1bab)
	      (1bb)   edge  node {$a$} (1bba)
	      (1bb)   edge  node {$b$} (1bbb)
	      (1aaa)   edge[dotted,bend right,swap]  node {$1$} (1eps)
	      (1aab)   edge[dotted,bend left]  node {$1$} (1eps2)
	      (1baa)   edge[dotted]  node {$1$} (1eps)
	      (1bab)   edge[dotted]  node {$1$} (1eps2)
	      (1aba)   edge[dotted]  node {$2$} (1eps2)
	      (1bba)   edge[dotted,bend right]  node {$2$} (1eps2)
	      (1abb)   edge[dotted,swap]  node {$3$} (1eps)
	      (1bbb)   edge[dotted,bend left]  node {$3$} (1eps)
	      ;	            
      \end{tikzpicture}     
    }   
    \caption{The Quasi-Deterministic Structure Associated with $A$.}
    \label{fig qdet asso ex}
  \end{figure}

\section{Reduction of a quasi-deterministic structure}\label{sec qds}
\modif{}{In this section, we show how to reduce the number of states in a QDS, first by getting rid of  useless states, then by merging equivalent states.}
\subsection{Accessibility and co-accessibility of a QDS}
The definition of trim quasi-deterministic structure differs from the one of trim automaton. Indeed, accessibility and
co-accessibility as defined in automata are not enough 
to obtain
a trim quasi-deterministic structure (see Example~\ref{ex useful} for an illustration). 

\begin{definition}
\modif{}{A state of a QDS is useful if it is on a successful path or if it is initial. A transition is useful if it appears on a successful path. The finality of a state is useful if this state is the destination of a successful path. A QDS is trim if \textbf{(1)}  each state is useful, \textbf{(2)} each transition is useful and \textbf{(3)} the finality of each final state is useful.} 
\end{definition}

\begin{example}\label{ex useful}
  Consider the QDS \modif{}{$(\Sigma, m=3, \Gamma,{\cal  Q},q_1,F,\delta,\gamma)$} in Figure~\ref{fig ex qds trim}. The only useful states are states $\modif{1}{q_1}$ and $\modif{2}{q_2}$. Indeed, states $\modif{7}{q_7}$ and $\modif{8}{q_8}$ in Figure~\ref{fig ex qds trim} are not on any shiftable path: using the $\gamma$-transition of label $1$, there is  a symbol $b$ or $c$ in the reading window when the QDS reaches the state $\modif{4}{q_4}$. Hence there is no word $w$ such that $\Delta(q_1,w)\in\{\modif{7}{q_7},\modif{8}{q_8}\}$. Furthermore, since there is no word $w$ such that $\Delta(q_1,w)=\modif{5}{q_5}$, and as $q_6$ is not a final state, both $\modif{5}{q_5}$ and $q_6$ are not useful.
\modif{}{Indeed, $\Delta(q_1,ab)=q_3$, $\Delta(q_1,aba)=\bot$ and for any integer $j>0$, $\Delta(q_1,abb^j)=q_6$.}
\end{example}  

\begin{minipage}[b]{0.49\linewidth}
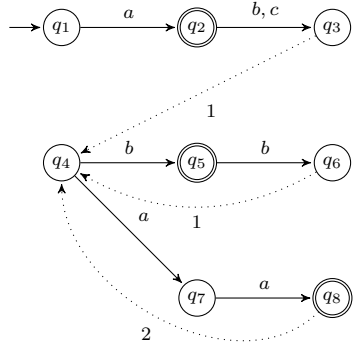
\begin{figure}[H]
  \centerline{
    \begin{tikzpicture}[node distance=2cm,bend angle=30,transform shape,scale=0.9]   
	    \node[initial, state] (1) {$\modif{1}{q_1}$};     
	    \node[accepting, state, right of=1] (2) {$\modif{2}{q_2}$}; 
	    \node[state, right of=2] (4) {$\modif{3}{q_3}$}; 
	    \node[state, below of=1] (6) {$\modif{4}{q_4}$};  
	    \node[accepting, state, right of=6] (7) {$\modif{5}{q_5}$}; 
	    \node[state, right of=7] (8) {$\modif{6}{q_6}$};  
	    \node[state, below of=7] (9) {$\modif{7}{q_7}$};  
	    \node[state, right of=9,accepting] (10) {$\modif{8}{q_8}$};  
	    \path[->]
	      (1)   edge  node {$a$} (2)
	      (2)   edge  node {$\modif{b}{b,c}$} (4)
	      (6)   edge  node {$b$} (7)
	      (6)   edge  node {$a$} (9)
	      (7)   edge  node {$b$} (8)
	      (9)   edge  node {$a$} (10)
	      (4)   edge[dotted]  node {$1$} (6)
	      (8)   edge[dotted,bend left]  node {$1$} (6)
	      (10)   edge[in=-90,out=-135,dotted,looseness=1]  node {$2$} (6)
	      ;	            
      \end{tikzpicture}}
  \caption{A Not-trim Quasi-Deterministic Structure.}
  \label{fig ex qds trim}
\end{figure} 
 \end{minipage}  
\begin{minipage}[b]{0.49\linewidth}
\begin{figure}[H]
  \centerline{
    \begin{tikzpicture}[node distance=2cm,bend angle=30,transform shape,scale=0.9]   
	    \node[initial, state] (1) {$\modif{1}{q_1}$};     
	    \node[accepting, state, right of=1] (2) {$\modif{2}{q_2}$}; 
	     \node[below of=1] (6){};
	    \path[->]
	      (1)   edge  node {$a$} (2);
      \end{tikzpicture}}
  \caption{Its trim part}
\end{figure}
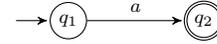 
 \end{minipage}  
\bigskip

\modif{A state is useful if it appears on a successful path or if it is initial. A transition is useful if it appears on a successful path. The finality of a state is useful if this state is the destination of a successful path.
 A trim part of a quasi-deterministic structure can be computed by keeping only useful components. }{ Let us show that the trim part 
 of a QDS is computable.}
Let $w,w'$ be two words of $\Sigma^*$. We denote by $w'\leq w$ (resp $w'<w$)  if $w'$ is a prefix of $w$ ($w'$ is a proper prefix of $w$). 

In order to compute the trim part of a QDS, we need to decide whether a state, an edge or a final state appears on a successful path.  The successful  paths can be computed through the path-DFA associated with any QDS, defined as follows.

\begin{definition}
  Let $S=(\Sigma,m, \Gamma, {\cal Q},0,F,\delta,\gamma)$ be a QDS. The path-DFA of $S$ is the DFA $(\Sigma',Q',$ $\{i'\},F',\delta')$ defined as follows:
  \begin{itemize}
    \item $\Sigma'=\Sigma \cup \Gamma$ 
    \item $Q'=\{(p,u,v)\in Q\times \Sigma^{l_1}\times \Sigma^{l_2}\mid l_1,l_2\leq m-1\wedge (u\leq v \vee v\leq u)\}$
    \item $i'=(0,\varepsilon,\varepsilon)$
    \item $F'=\{(p,u,v)\in Q'\mid p\in F \wedge v<u \}$
    \item $\delta'((p,u,v),a)=
      \left\{
        \begin{array}{ll}
          (\delta(p,a),ua,v)&\text{ if }p\not\in Q_m,\ a\in \Sigma, \delta(p,a)\neq \bot, \\
          & \ u\cdot a\leq v \vee v\leq u,\\
          (q,\varepsilon,u[l+1,|u|])&\text{ if } p\in Q_m,\gamma(p)=(q,l), a=l,\\
          \bot&\text{ otherwise}
        \end{array}
      \right.$
  \end{itemize}
\end{definition}

  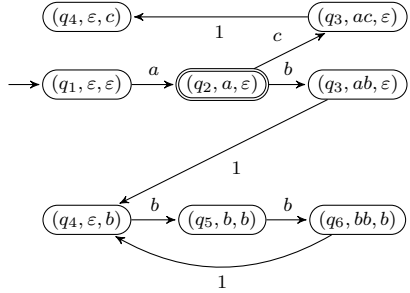
\begin{figure}[H]
  \centerline{
    \begin{tikzpicture}[node distance=2cm,bend angle=30,transform shape,scale=0.9]   
	    \node[initial, state, rounded rectangle] (1) {$(\modif{1}{q_1},\varepsilon,\varepsilon)$};   
	    \node[ state, rounded rectangle, above of =1,node distance=1cm] (6b) {$(\modif{4}{q_4},\varepsilon,c)$};     
	    \node[accepting, state, right of=1, rounded rectangle] (2) {$(\modif{2}{q_2},a,\varepsilon)$}; 
	    \node[state, right of=2, rounded rectangle] (4) {$(\modif{3}{q_3},ab,\varepsilon)$}; 
	    \node[state, above of=4, rounded rectangle, node distance=1cm] (4b) {$(\modif{3}{q_3},ac,\varepsilon)$}; 
	    \node[state, below of=1, rounded rectangle] (6) {$(\modif{4}{q_4},\varepsilon,b)$};  
	    \node[ state, right of=6, rounded rectangle] (7) {$(\modif{5}{q_5},b,b)$}; 
	    \node[state, right of=7, rounded rectangle] (8) {$(\modif{6}{q_6},bb,b)$}; 
	    \path[->]
	      (1)   edge  node {$a$} (2)
	      (2)   edge  node {$b$} (4)
	      (2)   edge  node {$c$} (4b)
	      (6)   edge  node {$b$} (7)
	      (7)   edge  node {$b$} (8)
	      (4)   edge  node {$1$} (6)
	      (4b)   edge  node {$1$} (6b)
	      (8)   edge[bend left]  node {$1$} (6)
	      ;	            
      \end{tikzpicture}  
  }
  \caption{The  Path-DFA of the QDS of Figure \ref{fig ex qds trim}.}
\end{figure}

  \modif{}{Let us now explicit the relation between the shiftable paths of a QDS and a path in the associated path-DFAs.}

\begin{theorem}\label{link-QDS-pathDFA}
  Let  $S=(\Sigma,m, \Gamma, {\cal Q},0,F,\delta,\gamma)$ be a QDS. Let $A=(\Sigma',Q',\{i'\},$ $F',\delta')$ be the path-DFA of $S$. The three following conditions hold:
  \begin{enumerate}
    \item A state $p\in {\cal Q}$ is useful  if and only if there exists a useful state $(p,u,v)$ in $Q'$.
    \item A transition $(p,a,q)\in \delta\cup \gamma$ is useful if and only if there exists a transition $((p,u,v),a,(q,u',v'))$ in $\delta'$ where the states $(p,u,v)$ and $(q,u',v')$ are useful.
    \item The finality of the state $p\in  {\cal Q}$ is useful if and only if there exists a final useful state $(p,u,v)$ in $Q'$.
  \end{enumerate}
\end{theorem}

The notion of successful path in a QDS can be expressed through the notion of successful path in its associated path-DFA.

\begin{lemma}\label{lem-shift}
 Let $S=(\Sigma, m, \Gamma ,{\cal Q},0,F,\delta,\gamma)$ be a QDS and $A=(\Sigma \cup \Gamma,Q',\{(0,\varepsilon,\varepsilon)\},F',\delta')$ be its path-DFA. Let $w=a_1\cdots a_n$ be a word over $\Sigma \cup \Gamma$ and $t=((0,a_1,p_1)\ldots (p_{n-1},a_n,p_n))$ be a path in $S$. The two following conditions are equivalent:
  \begin{enumerate}
    \item The path $t$ is shiftable,
    \item  there exists two words $u,v\in \Sigma^*$ with $v< u$ and a path $t'$ in $A$ labelled $w$ from $(0,\varepsilon,\varepsilon)$ to $(p_n,u,v)$.
  \end{enumerate}
  Moreover, $t$ is successful in $S$ if and only if $t'$ is successful in $A$.
\end{lemma}
\begin{proof}
\ \\
\begin{itemize}
\item[$\mathbf{(1)\Rightarrow (2)}$] By recurrence on the number $k$ of $\gamma$-transitions in $t$. If $k=0$, then $w=a_1\cdots a_n$ is in $\Sigma^*$. By construction, there exists a path $t'$ in $A$ labelled $w$ from $(0,\varepsilon, \varepsilon)$ to $(p_n,w,\varepsilon)$. Let us show the property for a path with $k+1$ $\gamma$-transitions. Then $t=((0,a_1,p_1)\cdots (p_{i-1},a_i,p_i)$ $(p_i,l,p_{i+1})(p_{i+1},a_{i+2},p_{i+2}) \cdots (p_{n-1},a_n,p_n))$ with $(p_i,l,p_{i+1})$ the $k+1^{th}$ $\gamma$-transition of $t$. By definition of a shiftable path, $t=((0,a_1,p_1)\cdots (p_{i-1},a_i,p_i))$ is shiftable and then by the recurrence hypothesis there exists a path in $A$ from $(0,\varepsilon,\varepsilon)$ to $(p_i,u,v)$ with $v < u$.
By construction of $A$, $\delta '((p_i,u,v),l)=((p_{i+1},\epsilon,u[l+1,|u|]))$. As $t$ is shiftable, we have $a_{i+2}\cdots a_{i+2+|u|-(l+1)}=u[l+1,|u|]$. So, by construction, there exists a path in $A$ from $(p_{i+1},\epsilon,u[l+1,|u|])$ to $(p_n,a_{i+2}\cdots a_n,u[l+1,|u|])$ with $a_{i+2}\cdots a_n$ in $\Sigma^*$ and $u[l+1,|u|]< a_{i+2}\cdots a_n$ which ends the proof.
\item[$\mathbf{(2)\Rightarrow (1)}$] By recurrence on the number $k$ of transitions labelled by a symbol of $\Gamma$ in $t'$ ($\Gamma$-transitions). If $k=0$ then $t'=(((0,\varepsilon,\varepsilon),a_1,(p_1,a_1,\varepsilon))\cdots ((p_{n-1},a_1\cdots a_{n-1},\varepsilon),a_n,(p_n,w,\varepsilon)))$. Consequently, the path $t=((0,a_1,p_1)\cdots (p_{n-1},a_n,p_n))$ of $S$ is shiftable. Let us show the property for a path with $k+1$ $\Gamma$-transitions. Let $t'=(((0,\varepsilon,\varepsilon),a_1,(p_1,a_1,\varepsilon))\cdots ((p_{i-1},u',v'),a_i,(p_i,$ $u,v))((p_i,u,v),l,(p_{i+1},$ $\varepsilon,u[l+1,|u|]))\cdots ((p_{n-1},u'_1,v'_1),a_n,(p_n,u_1,v_1)))$ with $((p_i,u,v), l, (p_{i+1},$ $\varepsilon,u[l+1,|u|]))$ the last $\Gamma$-transition of $t'$. By the induction hypothesis  $v< u$ implies that the path $((0,a_1,p_1)\cdots (p_{i-1},a_i,p_i))$ is a shiftable path in $S$. As there exists a path from $(p_{i+1},\varepsilon, u[l+1,|u|])$ to $(p_n,u_1,v_1)$ in $A$ with $u_1\in \Sigma^*$, $v_1=u[l+1,|u|]< u_1$. So $u[l+1,|u|]=a_{i-(|u|-(l+1))}\cdots a_i=a_{i+2}\cdots a_{i+2+(|u|-(l+1))}<a_{i+2}\cdots a_n$ and so $t$ is a shiftable path.
\end{itemize}
\end{proof}
\begin{proof}(of Theorem \ref{link-QDS-pathDFA})
Conditions $1$, $2$ and $3$ directly holds  from Definition \ref{def successful} and Lemma \ref{lem-shift}.
\end{proof}

\begin{example}\label{ex path DFA}
  Consider the QDS in Figure~\ref{fig ex qds path dfa} \modif{}{(the  QDS of  Figure~\ref{fig ex qds trim} in which the state $q_6$ has been made final)}.  The states $\modif{7}{q_7}$ and $\modif{8}{q_8}$ are not accessible, the transition $(\modif{2}{q_2},c,\modif{3}{q_3})$ is not useful, and neither is the finality of $\modif{5}{q_5}$. The accessible part of its path-DFA is given Figure~\ref{fig ex dfa path dfa} and its trim part in Figure~\ref{fig ex trim qds path dfa}.
\end{example}  

\begin{minipage}[b]{0.49\linewidth}
\begin{figure}[H]
  \centerline{
    \begin{tikzpicture}[node distance=2cm,bend angle=30,transform shape,scale=0.9]   
	    \node[initial, state] (1) {$\modif{1}{q_1}$};     
	    \node[accepting, state, right of=1] (2) {$\modif{2}{q_2}$}; 
	    \node[state, right of=2] (4) {$\modif{3}{q_3}$}; 
	    \node[state, below of=1] (6) {$\modif{4}{q_4}$};  
	    \node[accepting, state, right of=6] (7) {$\modif{5}{q_5}$}; 
	    \node[accepting,state, right of=7] (8) {$\modif{6}{q_6}$};  
	    \node[state, below of=7] (9) {$\modif{7}{q_7}$};  
	    \node[state, right of=9,accepting] (10) {$\modif{8}{q_8}$};  
	    \path[->]
	      (1)   edge  node {$a$} (2)
	      (2)   edge  node {$b,c$} (4)
	      (6)   edge  node {$b$} (7)
	      (6)   edge  node {$a$} (9)
	      (7)   edge  node {$b$} (8)
	      (9)   edge  node {$a$} (10)
	      (4)   edge[dotted]  node {$1$} (6)
	      (8)   edge[dotted,bend left]  node {$1$} (6)
	      (10)   edge[in=-90,out=-135,dotted,looseness=1]  node {$2$} (6)
	      ;	            
      \end{tikzpicture}  
  }
  \caption{A Quasi-Deterministic Structure $S$.}
  \label{fig ex qds path dfa}
\end{figure}
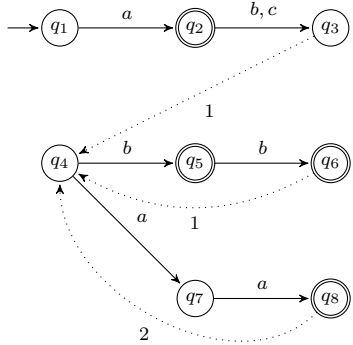 
\end{minipage}
\begin{minipage}[b]{0.49\linewidth}
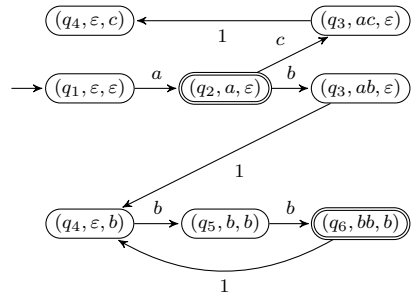
\begin{figure}[H]
  \centerline{
    \begin{tikzpicture}[node distance=2cm,bend angle=30,transform shape,scale=0.9]   
	    \node[initial, state, rounded rectangle] (1) {$(\modif{1}{q_1},\varepsilon,\varepsilon)$};   
	    \node[ state, rounded rectangle, above of =1,node distance=1cm] (6b) {$(\modif{4}{q_4},\varepsilon,c)$};     
	    \node[accepting, state, right of=1, rounded rectangle] (2) {$(\modif{2}{q_2},a,\varepsilon)$}; 
	    \node[state, right of=2, rounded rectangle] (4) {$(\modif{3}{q_3},ab,\varepsilon)$}; 
	    \node[state, above of=4, rounded rectangle, node distance=1cm] (4b) {$(\modif{3}{q_3},ac,\varepsilon)$}; 
	    \node[state, below of=1, rounded rectangle] (6) {$(\modif{4}{q_4},\varepsilon,b)$};  
	    \node[ state, right of=6, rounded rectangle] (7) {$(\modif{5}{q_5},b,b)$}; 
	    \node[accepting,state, right of=7, rounded rectangle] (8) {$(\modif{6}{q_6},bb,b)$}; 
	    \path[->]
	      (1)   edge  node {$a$} (2)
	      (2)   edge  node {$b$} (4)
	      (2)   edge  node {$c$} (4b)
	      (6)   edge  node {$b$} (7)
	      (7)   edge  node {$b$} (8)
	      (4)   edge  node {$1$} (6)
	      (4b)   edge  node {$1$} (6b)
	      (8)   edge[bend left]  node {$1$} (6)
	      ;	            
      \end{tikzpicture}  
  }
  \caption{The Path-DFA of S.}
  \label{fig ex dfa path dfa}
\end{figure}  
\end{minipage}

\begin{minipage}[b]{0.49\linewidth}
\begin{figure}[H]
  \centerline{
    \begin{tikzpicture}[node distance=2cm,bend angle=30,transform shape,scale=0.9]   
	    \node[initial, state, rounded rectangle] (1) {$(\modif{1}{q_1},\varepsilon,\varepsilon)$};   
	    \node[accepting, state, right of=1, rounded rectangle] (2) {$(\modif{2}{q_2},a,\varepsilon)$}; 
	    \node[state, right of=2, rounded rectangle] (4) {$(\modif{3}{q_3},ab,\varepsilon)$}; 
	    \node[state, below of=1, rounded rectangle] (6) {$(\modif{4}{q_4},\varepsilon,b)$};  
	    \node[ state, right of=6, rounded rectangle] (7) {$(\modif{5}{q_5},b,b)$}; 
	    \node[accepting,state, right of=7, rounded rectangle] (8) {$(\modif{6}{q_6},bb,b)$}; 
	    \path[->]
	      (1)   edge  node {$a$} (2)
	      (2)   edge  node {$b$} (4)
	      (6)   edge  node {$b$} (7)
	      (7)   edge  node {$b$} (8)
	      (4)   edge  node {$1$} (6)
	      (8)   edge[bend left]  node {$1$} (6)
	      ;	            
      \end{tikzpicture}  
  }
  \caption{The Trim  Path-DFA.}
  \label{fig trim ex dfa path dfa}
\end{figure}
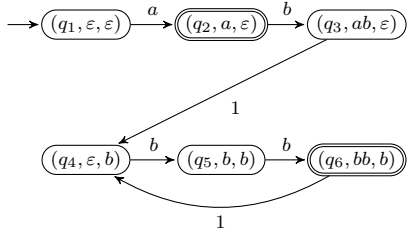  
\end{minipage}
\begin{minipage}[b]{0.49\linewidth}
\begin{figure}[H]
  \centerline{
    \begin{tikzpicture}[node distance=2cm,bend angle=30,transform shape,scale=0.9]   
	    \node[initial, state] (1) {$\modif{1}{q_1}$};     
	    \node[accepting, state, right of=1] (2) {$\modif{2}{q_2}$}; 
	    \node[state, right of=2] (4) {$\modif{3}{q_3}$}; 
	    \node[state, below of=1] (6) {$\modif{4}{q_4}$};  
	    \node[state, right of=6] (7) {$\modif{5}{q_5}$}; 
	    \node[accepting,state, right of=7] (8) {$\modif{6}{q_6}$};  
	    \path[->]
	      (1)   edge  node {$a$} (2)
	      (2)   edge  node {$b$} (4)
	      (6)   edge  node {$b$} (7)
	      (7)   edge  node {$b$} (8)
	      (4)   edge[dotted]  node {$1$} (6)
	      (8)   edge[dotted,bend left]  node {$1$} (6)
	      ;	            
      \end{tikzpicture}  
  }
  \caption{The Trim version of $S$.}
  \label{fig ex trim qds path dfa}
\end{figure}
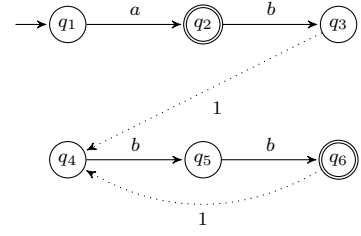 
\end{minipage}

\modif{}{As a direct consequence of Corollary~\ref{Cor-1}, we have:}

 \begin{lemma}
 Let $S'$ be the trim part of a QDS $S$. Then $L(S')=L(S)$.
 \end{lemma}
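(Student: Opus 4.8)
The plan is to show the two inclusions $L(S')\subseteq L(S)$ and $L(S)\subseteq L(S')$ separately, using the characterization of the recognized language via successful paths (Corollary~\ref{Cor-1}), which states that a word is in $L(S)$ if and only if it labels a successful path. Since $S'$ is obtained from $S$ by deleting only \emph{useless} components — states, transitions, and finality markers that do not appear on any successful path — the core idea is that the deletion operation preserves exactly the set of successful paths, hence preserves the set of labels of successful paths, hence preserves the recognized language.

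First I would fix a word $w$ and suppose $w\in L(S)$. By Corollary~\ref{Cor-1}, there exists a successful path $p=((q_1,x_1,q_2),\ldots,(q_n,x_n,q_{n+1}))$ in $S$ whose label is $w$, with $q_1=i$ and $q_{n+1}\in F$. Every state, every edge, and the finality of $q_{n+1}$ occurring along $p$ are by definition \emph{useful}, since they all appear on this one successful path. Because $S'$ retains every useful component, all of these states, transitions and the finality of $q_{n+1}$ survive in $S'$. Therefore $p$ is still a path in $S'$; it is still shiftable (shiftability is a condition on the labels $x_j$ and the positions of the $\gamma$-transitions, which are unchanged by the trimming), it still starts in $i$, and its endpoint is still final. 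Thus $p$ is a successful path in $S'$ with label $w$, and by Corollary~\ref{Cor-1} applied to $S'$ we get $w\in L(S')$.

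Conversely, suppose $w\in L(S')$. Again by Corollary~\ref{Cor-1}, $w$ labels a successful path $p'$ in $S'$. Since $S'$ is a sub-structure of $S$ (it is obtained purely by removing components, never adding any, and in particular the functions $\delta,\gamma$ of $S'$ are restrictions of those of $S$), every edge of $p'$ is also an edge of $S$, so $p'$ is a path in $S$ as well. Its shiftability, its initial state $i$, and the finality of its endpoint are all inherited from $S'$ (the endpoint is final in $S'$, hence \emph{a fortiori} final in $S$, as trimming only removes finality markers). Hence $p'$ is a successful path in $S$ labelled by $w$, giving $w\in L(S)$ by Corollary~\ref{Cor-1}.

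Combining the two inclusions yields $L(S')=L(S)$. The only genuinely delicate point — the one I expect to require the most care — is the forward direction, namely verifying that deleting all useless components simultaneously does not accidentally destroy a successful path: one must argue that if a path is successful then \emph{all} of its components are individually useful and therefore jointly survive the trimming. This is immediate from the definition of usefulness (a component is useful precisely when it lies on some successful path), but it relies on the subtlety, highlighted in Example~\ref{ex useful}, that usefulness for a QDS is defined relative to shiftable/successful paths rather than ordinary accessibility and co-accessibility; one should be careful that the notion of trimming used here is exactly the one computed from the path-DFA via the preceding corollary, so that the retained components are precisely those on successful paths. Everything else reduces to the observation that shiftability, initiality and finality are all preserved under restriction in both directions.
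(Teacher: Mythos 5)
Your proof is correct and follows exactly the paper's approach: the paper's own proof is a one-line appeal to Corollary~\ref{Cor-1} (words recognized $=$ labels of successful paths), and your argument is simply the detailed expansion of that appeal — every component of a successful path is useful and hence survives trimming, while trimming only restricts, so successful paths are exactly the same in $S$ and $S'$.
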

 \modif{\begin{proof}
 The proof is a direct consequence of Corollary~\ref{Cor-1}
 \end{proof}}{}
 
%

%
%
%
%

 In the following, we will consider trim QDS.

\subsection{Defining an equivalence relation $\sim$ for a QDS}

The minimization process in a DFA consists in merging each 
couple of states having the same right language.
The \emph{right language of a state} $q$ is defined as the set of words $w$ such that $\delta(q,w)$ is a final state. This computation leads to the canonical minimal DFA.

In a quasi-deterministic structure, the notion of right language can not be defined as in an automaton. Indeed, even if a state is useful, it can be not accessible. 
Thus, we \modif{can}{must} define an equivalence relation allowing us to reduce the size of a QDS. Notice that this
equivalence is sufficient
to reduce the number of states, but not to minimize: modifying the values of some $\modif{\Gamma}{\gamma}$-transitions might preserve the language while producing equivalent states.

Let $S=(\Sigma,m,\Gamma, {\cal Q}, 0,F,\delta,\gamma)$ be a QDS. The reduced QDS is always computed from a QDS not recognizing the empty word. Obviously, if $S$ recognizes $\varepsilon$, necessarily $0\in F$. In this case, $0$ is removed from $F$ to compute the reduced QDS $S'$ recognizing $L(S)\setminus \{\varepsilon\}$. The initial state of $S'$ is then made final to have $L(S')=L(S)$. Indeed, the finality of each final state of $Q_1$ is not useful, excepted for $0$ if $\varepsilon\in L(S)$.

We define an equivalence relation $\sim$ over ${\cal Q}=\bigcup_{k\in[1,m]}Q_k$. Let $p\in Q_i$, $q\in Q_j$.
  
 $$p\sim q \Leftrightarrow 
 \left\{\begin{array}{ll}
\text{(1)}& i=j,\\
\text{(2)}&p\in F \Leftrightarrow q\in F,\\
 \text{(3)}&\left\{\begin{array}{lll}
 \delta(p,a)\sim \delta(q,a)&\text{ if }i\neq m\\
 \tilde{\gamma}(p)= \tilde{\gamma}(q) \wedge  \overline{\gamma}(p)\sim \overline{\gamma}(q)&\text{ otherwise}
 \end{array}\right.
 \end{array}\right.
 $$ 
  
 \noindent
  We denote $[p]_{\sim}=\{q\mid p\sim q\}$.\\

  The \emph{quotient of} $S$ with respect to $\sim$ is the QDS $S_{\sim}=(\Sigma, m, \Gamma, \bigcup_{k\in[1,m]}Q_{k_{\sim}},[0]_\sim,F_{\sim},\delta_{\sim},\gamma_{\sim})$ defined by:
  \begin{itemize}
    \item $\forall k\in[1,m]$, ${Q_k}_{\sim}=\{[p]_{\sim}: p\in Q_k\}$,
      \item $F_{\sim}=\{[p]_{\sim}: [p]_{\sim}\cap F\neq \emptyset\}$, 
\item $\forall k\in[1,m-1]$, $\forall (p,a)\in Q_k\times\Sigma$, $\delta_{\sim}([p]_{\sim},a)=[\delta(p,a)]_\sim$,
    \item $\forall p\in Q_m$, $\gamma_{\sim}([p]_{\sim})=(l,[q]_\sim)$ with $\gamma(p)=(l,q)$.
  \end{itemize}  
  \modif{where for any state $q$, $[q]_{\sim}=\{q'\mid q'\sim q\}$.}{}
  
  We first show that merging two equivalent states preserves the language. 
  This property is sufficient, but not necessary. 

\begin{proposition}\label{prop rightinv pres lang}
  Let $S=(\Sigma,m,\Gamma, \bigcup_{k\in[1,m]}Q_k, 0,F,\delta,\gamma)$ be a QDS and $S_{\sim}=(\Sigma, m, \Gamma, \bigcup_{k\in[1,m]}Q_{k_{\sim}},$ $[0]_\sim,F_{\sim},\delta_{\sim},\gamma_{\sim})$  be the quotient of $S$ with respect to $\sim$. We have:
  
  \centerline{
    $L(S)=L(\modif{S'}{S_{\sim}})$.
  }
\end{proposition}
\begin{proof}
Let $\Delta_\sim$ be the extended transition function of $S_{\sim}$.
  \begin{enumerate}
    \item Let us consider the empty word $\varepsilon$:
  
  \centerline{
      $\varepsilon\in L(S)$  $\Leftrightarrow$  $0\in F$  $\Leftrightarrow$  $[0]_\sim \in \modif{F'}{F_\sim}$ \modif{$\Leftrightarrow$  $i' \in F'$}{}  $\Leftrightarrow$  $\varepsilon\in L(S_\sim)$
  }
  
  \item Let $w$ be a word in $\Sigma^+$. Let us show by recurrence on the length of $w$ that for any element $q$ in $Q_1$, $\Delta(q,w)\in F$ $\Leftrightarrow$ $\modif{\Delta'}{\Delta_\sim}([q]_\sim,w)\in \modif{F'}{F_\sim}$.
  
   \begin{enumerate}
     \item Suppose that $|w|\leq m$. Then:
  
    \begin{tabular}{ll@{\ }l}
        $\Delta(q,w)\in F$ &
         $\Leftrightarrow$&  $\delta(q,w)\in F$ 
        $\Leftrightarrow$  $\exists f\in F\text{ such that } \delta(q,w)=f$\\
       & $\Leftrightarrow$  &$\exists [f]_{\sim}\in \modif{F'}{F_\sim}\text{ such that } \delta_\sim([q]_{\sim},w)=[f]_{\sim}$
        $\Leftrightarrow$  $\modif{\Delta'}{\Delta_\sim}([q],w)\in \modif{F'}{F_\sim}$.\\
    \end{tabular}
  
 \item Suppose that $|w|>m$. Then: 

    \begin{tabular}{l@{\ }l@{\ }l}
       $\Delta(q,w)\in F$
      &  $\Leftrightarrow$ & $\gamma(\delta(q,w[1,m]))=(l,p) \wedge \Delta(p,w[l+1,|w|])\in F$\\
      & $\Leftrightarrow$ &  $ \delta(q,w[1,m])=r \wedge \gamma(r)=(l,p) \wedge \Delta(p,w[l+1,|w|])\in F$\\
      & $\Leftrightarrow$ & $\delta_\sim([q]_\sim,w[1,m])=[r]_\sim$ $\wedge \modif{\gamma'}{\gamma_\sim}([r]_\sim)=(l,[p]_\sim) \wedge \modif{\Delta'}{\Delta_\sim}([p]_\sim,w[l+1,|w|])\in \modif{F'}{F_\sim}$\\
      & & (According to  \modif{}{the} recurrence hypothesis)\\
      & $\Leftrightarrow$ & $\modif{\Delta'}{\Delta_\sim}([q]_\sim,w)\in \modif{F'}{F_\sim}$\\
    \end{tabular}
  \end{enumerate}
  
  \item Finally, since for any element $q$ in $Q_1$, for any word $w$ in $\Sigma^*$, $\Delta(q,w)\in F$ $\Leftrightarrow$ $\modif{\Delta'}{\Delta_\sim}([q]_\sim,w)\in \modif{F'}{F_\sim}$, it holds:
  
  \centerline{
    $\Delta(0,w)\in F$ $\Leftrightarrow$ $\modif{\Delta'}{\Delta_\sim}([0]_\sim,w)\in \modif{F'}{F_\sim}$ and $L(S)=L(\modif{S'}{S_{\sim}})$.
  }
  \end{enumerate}
  
\end{proof}

\subsection{Computing the equivalence relation $\sim$}

In this section, we show how to compute step by step the equivalence relation $\sim$  which merges states of  a QDS.

\begin{definition}\label{def equiv}
  Let $S=(\Sigma,m,\Gamma, {\cal Q}=\bigcup_{k\in [1,m]} Q_k, 0,F,\delta,\gamma)$ be a QDS. Let $p\in Q_i$, $q\in Q_j$. We define  the equivalence relation $\sim _{l,\ l\in \mathbb{N}}$ over $\mathcal{Q}$  by:

    $$\begin{array}{rll}
    \medskip
     p\sim_0 q &\Longleftrightarrow &
    \left\{
      \begin{array}{ll}
\text{(1)}& i=j,\\
\text{(2)}&p\in F \Leftrightarrow q\in F,\\
 \text{(3)}&     \tilde{\gamma}(p)=\tilde{\gamma}(q)\text{ if }i=m
      \end{array}
    \right.\\
  \medskip
     p\sim_{l,\ l>0} q &\Longleftrightarrow &
    \left\{
      \begin{array}{ll}
\text{(1)}&p\sim_{l-1} q,\\
 \text{(2)}&    \left\{ \begin{array}{ll}
 \delta(p,a)\sim_{l-1} \delta(q,a)&\text{ if }p,q\not\in Q_m\\
 \overline{\gamma}(p)\sim_{l-1}\overline{\gamma}(q)&\text{ otherwise}
 \end{array}
 \right.
      \end{array}
    \right.
    \end{array}
    $$

    
\end{definition}

For the purposes of notation, $\sim_l$ designates the set of  class of this equivalence relation. By definition, we have $\sim_{l+1}\subset \sim_l$. It is obvious that $\sim_{l+1}$ is computable from $\sim_l$. We give several lemmas to show that this computation halts and is bounded by $|\cal Q|$. 

\begin{lemma}
Let  $S=(\Sigma,m,\Gamma, {\cal Q}=\bigcup_{k\in [1,m]} Q_k, 0,F,\delta,\gamma)$ be a QDS.  For any integer $l$, 
 $$ \sim_{l+1}=\sim_l\ \Longrightarrow \ \forall k\in \mathbb{N} \sim_{l+k}=\sim_l$$
   \end{lemma}
   \begin{proof}
Let $l$ be the integer such that $\sim_l = \sim_{l+1} \neq\sim_{l+2}$.
Then, there exists $p\in Q_i$,  $q\in Q_j$ such that $p\sim_l q$, $p \sim_{l+1}q $, and $p \not \sim_{l+2} q$.
But $p \not \sim_{l+2} q$ implies that there exists $a \in \Sigma$ such that condition $(2)$ is false (impossible for condition $(1)$).
In this case,
if $p,q\notin Q_m$, $\delta(p,a)\not\sim_{l+1} \delta(q,a)$. But by hypothesis $p\sim_{l+1}q$, then
$\delta(p,a)\sim_{l} \delta(q,a)$ and then $\delta(p,a)\sim_{l+1} \delta(q,a)$ which leads to a contradiction.

If $p,q\in Q_m$, $\overline{\gamma(p)} \not \sim_{l+1} \overline{\gamma(q)}$.
By hypothesis, we have $p \sim_{l+1}q $, that is $\overline{\gamma(p)} \sim_{l} \overline{\gamma(q)}$, and
$\sim_l = \sim_{l+1}$. This would imply $\overline{\gamma(p)} \sim_{l+1} \overline{\gamma(q)}$. Contradiction.
   \end{proof}

%
%
%
%
%
%
\begin{lemma}
Let  $S=(\Sigma,m,\Gamma, {\cal Q}=\bigcup_{k\in [1,m]} Q_k, 0,F,\delta,\gamma)$ be a QDS.   
 $$ \sim_{|{\cal Q}|}=\sim$$
   \end{lemma}
   \begin{proof}

 The smallest integer $l$ such that $\sim_{l}=\sim_{l+1}$ is the highest integer such that $\sim_{l}\subsetneq \sim_{l-1}$. Since the smallest difference between two consecutive equivalences is based on the elimination of only one state, it holds that such an integer $l$ exists and $l\leq |{\cal Q}|$.
  
\end{proof}

\section{QDS can be exponentially smaller than DFAs}\label{sec qds vs dfa}

%

Let $(L_k)_{k\in\mathbb{N}}$ be the family of regular languages on an alphabet $\Sigma$   defined by $L_k=\Sigma^*\cdot\{\sigma\}\cdot\Sigma^k$ with $\sigma\in\Sigma$. This family is known to have a $k+2$-states NFA and a 
$2^{k+1}$
-states minimal DFA.
In this section, we illustrate the factorization power of QDS. We show how to compute a polynomial-size QDS $S_k$  recognizing $L_k$ (See  Figure~\ref{fig ex qds reck}).


\begin{definition}
  Let $k$ be an integer. We denote by $S_k$ the quasi-deterministic structure $(\Sigma,m=k+3, \Gamma, {\cal Q}=\bigcup_{j\in [1,m]} (Q_j\cup Q^{\bullet }_j), 1_1,F,\delta,\gamma )$ defined by:
  \begin{itemize}
    \item $Q_1=\{1_1\}$, 
        \item $\forall j\in\{2,m-1\}$, $Q_j=\{1_j,\ldots , (m-2)_j\}$, $Q^{\bullet}_j= \{1^{\bullet}_j,\ldots , (m-2)^{\bullet}_j\}$,
\item $Q_m=\{1_m,\ldots , (m-1)_m\}$,
    \item $F=\{1_{m-1},\ldots , (m-2)_{m-1}\} \cup \{1_m\}$,
    \item 
    $\delta=
      \left(
        \begin{array}{lll}
         & \bigcup_{a\in\Sigma\setminus \sigma}\{(1_1,a,1^{\bullet}_2)\} \cup \{(1_1,\sigma,1_2)\}&\mathbf{(1)}\\
         \cup & \bigcup_{j\in [1,m-3]}\{(j_{j+1},\sigma,j_{j+2}), (j^{\bullet}_{j+1},\sigma,j^{\bullet}_{j+2})\}&\mathbf{(2)}\\ 
               \cup & \bigcup_{j\in [1,m-3]}\{(j_{j+1},a,(j+1)_{j+2}), (j^{\bullet}_{j+1},a,(j+1)^{\bullet}_{j+2})\mid a\in\Sigma\setminus \{\sigma\}\}&\mathbf{(3)}\\ 
       \cup & \bigcup_{j\in [3,m-2]}\{(i_{j},a,i_{j+1}), (i^{\bullet}_{j},a,i^{\bullet}_{j+1})\mid i\in [1,j-2], a\in \Sigma\}&\mathbf{(4)}\\ 
        \cup & \bigcup_{j\in [1,m-3]}\{(j_{m-1},a,j_{m}), (j^{\bullet}_{m-1},a,j_{m})\mid  a\in \Sigma\}&\mathbf{(5)}\\ 
            \cup & \bigcup_{a\in\Sigma\setminus \sigma}\{((m-2)_{m-1},a,(m-1)_m),((m-2)^{\bullet}_{m-1},a,(m-1)_m)\}&\mathbf{(6)}\\
            \cup & \{((m-2)_{m-1},\sigma,(m-2)_m),((m-2)^{\bullet}_{m-1},\sigma,(m-2)_m)\}&\mathbf{(7)} 
          %
        \end{array}
      \right.$
    \item $\gamma=\bigcup_{j\in[1,m-1]}\{(j_m,j,1_1)\}$
  \end{itemize}
\end{definition}

\begin{example}

The QDS $S_2=(\Sigma,m=5, \Gamma, {\cal Q}=\bigcup_{j\in [1,5]} (Q_j\cup Q^{\bullet }_j), 1_1,F,\delta,\gamma )$ is represented by Figure \ref{fig ex qds reck} where the arrows of $\delta$ are respectively coded by 
$\textcolor{green}{\Arrow[loosely dashdotted]}$ for \textbf{(1)}, $\textcolor{red}{\Arrow[red,decorate,decoration={snake,amplitude=.4mm,segment length=2mm,post length=1mm},densely dotted]}$ for \textbf{(2)}, $\textcolor{red}{\DashedArrow[decorate,decoration={snake,amplitude=.4mm,segment length=2mm,post length=1mm},densely dashed]}$ for \textbf{(3)}, $\Arrow$ for \textbf{(4)}, $\Arrow[decorate,decoration={snake,amplitude=.4mm,segment length=2mm,post length=1mm}]$ for \textbf{(5)}, $\DashedArrow[densely dashdotted]$ for \textbf{(6)} and $\textcolor{green}{\DashedArrow}$ for \textbf{(7)}.
The arrows of $\gamma$ are represented by $\DashedArrow[dotted]$.
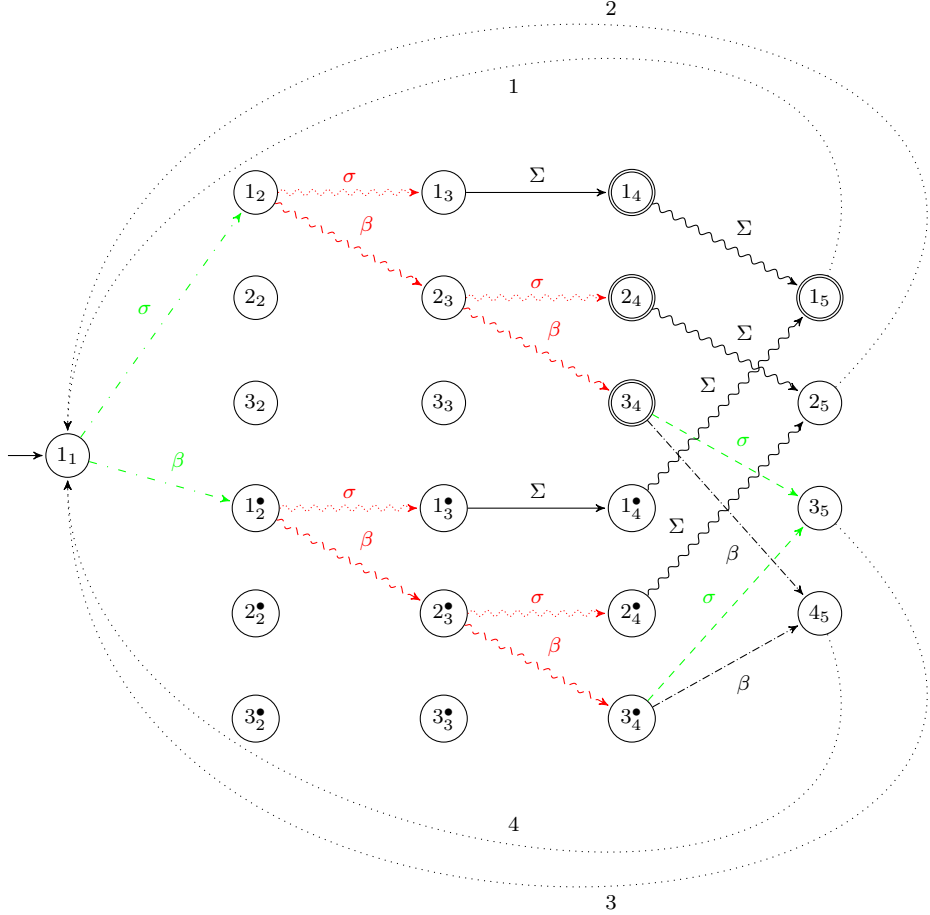
\begin{figure}[H]
    \begin{tikzpicture}[node distance=2cm,bend angle=30,transform shape]   
	    \node[initial, state] (11) at (0,0) {$1_1$};     
	    \node[state] (32) at (2.5,0.7) {$3_2$};     
              \node[state] (22) at (2.5,2.1) {$2_2$};     
              \node[state] (12) at (2.5,3.5) {$1_2$};     
	    \node[state] (12p) at (2.5,-0.7) {$1^\bullet_2$};     
              \node[state] (22p) at (2.5,-2.1) {$2^\bullet_2$};     
              \node[state] (32p) at (2.5,-3.5) {$3^\bullet_2$};     
	    \node[state] (33) at (5,0.7) {$3_3$};     
              \node[state] (23) at (5,2.1) {$2_3$};     
              \node[state] (13) at (5,3.5) {$1_3$};     
	    \node[state] (13p) at (5,-0.7) {$1^\bullet_3$};     
              \node[state] (23p) at (5,-2.1) {$2^\bullet_3$};     
              \node[state] (33p) at (5,-3.5) {$3^\bullet_3$};     
   	    \node[state,accepting] (34) at (7.5,0.7) {$3_4$};     
              \node[state,accepting] (24) at (7.5,2.1) {$2_4$};     
              \node[state,accepting] (14) at (7.5,3.5) {$1_4$};     
	    \node[state] (14p) at (7.5,-0.7) {$1^\bullet_4$};     
              \node[state] (24p) at (7.5,-2.1) {$2^\bullet_4$};     
              \node[state] (34p) at (7.5,-3.5) {$3^\bullet_4$};     
   	    \node[state,accepting] (15) at (10,2.1) {$1_5$};     
              \node[state] (25) at (10,0.7) {$2_5$};     
              \node[state] (35) at (10,-0.7) {$3_5$};     
              \node[state] (45) at (10,-2.1) {$4_5$};     
	    \path[->]
	      (11)   edge[green,loosely dashdotted]  node {$\sigma$} (12)
	                edge[green,loosely dashdotted ] node {$\beta$} (12p)
	        (12) edge[red,decorate,decoration={snake,amplitude=.4mm,segment length=2mm,post length=1mm},densely dotted]  node {$\sigma$} (13)        
	        (12p) edge[red,decorate,decoration={snake,amplitude=.4mm,segment length=2mm,post length=1mm},densely dotted]  node {$\sigma$} (13p)        
  	        (23) edge[red,decorate,decoration={snake,amplitude=.4mm,segment length=2mm,post length=1mm},densely dotted]  node {$\sigma$} (24)        
	        (23p) edge[red,decorate,decoration={snake,amplitude=.4mm,segment length=2mm,post length=1mm},densely dotted]  node {$\sigma$} (24p)        
	        (13) edge  node {$\Sigma$} (14)        
  	        (13p) edge  node {$\Sigma$} (14p)       
	        (12) edge[red,decorate,decoration={snake,amplitude=.4mm,segment length=2mm,post length=1mm},densely dashed]  node {$\beta$} (23)        
	        (23) edge[red,decorate,decoration={snake,amplitude=.4mm,segment length=2mm,post length=1mm},densely dashed]  node {$\beta$} (34)        
	        (12p) edge[red,decorate,decoration={snake,amplitude=.4mm,segment length=2mm,post length=1mm},densely dashed]  node {$\beta$} (23p)        
	        (23p) edge[red,decorate,decoration={snake,amplitude=.4mm,segment length=2mm,post length=1mm},densely dashed] node {$\beta$} (34p)       
	        (14) edge[decorate,decoration={snake,amplitude=.4mm,segment length=2mm,post length=1mm}]  node {$\Sigma$} (15)        
	         (24) edge[decorate,decoration={snake,amplitude=.4mm,segment length=2mm,post length=1mm}]   node {$\Sigma$} (25)        
	        (34) edge[green,dashed]  node {$\sigma$} (35)        
	         (34) edge[densely dashdotted]  node [swap,pos=0.65]{$\beta$} (45)        
	        (34p) edge[green, dashed]  node {$\sigma$} (35)        
	         (34p) edge[densely dashdotted]  node[swap] {$\beta$} (45)    
	         (14p) edge[decorate,decoration={snake,amplitude=.4mm,segment length=2mm,post length=1mm}] node {$\Sigma$}(15)	          
	         (24p) edge[decorate,decoration={snake,amplitude=.4mm,segment length=2mm,post length=1mm}] node[pos=0.3] {$\Sigma$}(25)	          
  	      (15)   edge[dotted,in=90, out=70,looseness=1.3,swap]  node[swap] {$1$} (11)
	      (25)   edge[dotted,in=90, out=45,looseness=2,swap]  node {$2$} (11)
	      (35)   edge[dotted,in=-90, out=-45,looseness=2]  node {$3$} (11)
	      (45)   edge[in=-90, out=-70,dotted,looseness=1.3]  node[swap] {$4$} (11);
 \end{tikzpicture}  
	  \caption{The QDS $S_2$.}
	  \label{fig ex qds reck}
	\end{figure} 
\end{example}

\begin{proposition}\label{prop Sk bon lang}
   For any integer $k$, the QDS $S_k$ recognizes $L_k$.
\end{proposition}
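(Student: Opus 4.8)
The plan is to reduce the whole analysis of $S_k$ to one explicit description of its shift behaviour and then to induct on the word length. First I would compute the state reached on a full window. For a window $u=u_1\cdots u_{k+2}$ read from $(1,1)$, a short induction on the number $t$ of read symbols shows that $\delta((1,1),u[1,t])$ stays on the ``diagonal'' $(t,t+1)$ (primed iff $u_1=b$) as long as $u_2=\cdots=u_t=b$, and freezes its first coordinate to $p-1$ as soon as the first $a$ of $u_2\cdots u_{k+2}$ is read at position $p$, after which the off-diagonal transitions merely advance the second coordinate. Hence $\delta((1,1),u)=(\sigma(u),k+3)$, where $\sigma(u)=p-1$ if $p=\min\{j\ge 2\mid u_j=a\}$ exists in $\{2,\ldots,k+2\}$ and $\sigma(u)=k+2$ otherwise; by definition of $\gamma_k$ the structure then shifts by exactly $\sigma(u)$ and returns to $(1,1)$.

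Second, I would establish the key shift-invariance: for every $w$ with $|w|\ge k+2$, setting $\sigma=\sigma(w[1,k+2])$, one has $\Delta((1,1),w)=\Delta((1,1),w[\sigma+1,|w|])$ and moreover $w\in L_k \Leftrightarrow w[\sigma+1,|w|]\in L_k$. The equality is immediate from Definition~\ref{def ext delta} together with the previous paragraph. For the equivalence I would use that membership in $L_k=\{a,b\}^*\{a\}\{a,b\}^k$ depends only on the symbol at the critical position $c=|w|-k$. Since $|w|\ge k+2$ forces $c\ge 2$, position $1$ is never critical, so discarding it is harmless; and by the definition of $\sigma$ every other discarded position, namely those in $\{2,\ldots,\sigma\}$, carries the letter $b$ and thus cannot be the critical $a$. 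Consequently, if the remaining suffix is long enough ($|w|-\sigma\ge k+1$) the critical symbol is preserved, while if it is too short ($|w|-\sigma<k+1$) then $c\le\sigma$ and $w_c=b$, so both $w$ and the suffix lie outside $L_k$.

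Third, I would settle the base case $|w|\le k+1$, where $\Delta((1,1),w)=\delta((1,1),w)$ lands in layer $|w|+1$. From the definition of $F_k$, the reached state is final iff $|w|=k+1$ and $w_1=a$, because the final states of $S_k$ at layers $\le k+2$ are exactly the unprimed layer-$(k+2)$ states; this is precisely the condition $w\in L_k$ when $|w|\le k+1$, and in particular $\varepsilon$ and all shorter words are correctly rejected. An induction on $|w|$ combining this base case with the shift-invariance (each shift strictly shortens the word, since $\sigma\ge 1$) then gives $\Delta((1,1),w)\in F_k \Leftrightarrow w\in L_k$ for every $w$, that is, $L(S_k)=L_k$.

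I expect the main obstacle to be the first step: untangling the diagonal/off-diagonal bookkeeping of the counter $n$, and of its primed copy used while no $a$ has yet occurred after the first letter, to prove that $S_k$ genuinely realises the shift ``position of the first $a$ after the first letter, minus one'' (or the full window length when no such $a$ exists). Once $\sigma$ is pinned down, the invariance of the second step — namely that the prefix skipped by a shift never conceals the critical $a$ — is the conceptual heart of the argument but is combinatorially light, and the induction closing the proof is routine.
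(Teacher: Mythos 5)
Your strategy --- compute the state reached on one full window, show that each window shift preserves membership in $L_k$, and then induct on $|w|$ --- is sound, and it is genuinely different from the paper's proof: the paper runs a forward induction that maintains, for every possible value of $\Delta(i,w)$, an explicit case-by-case description of the corresponding suffix of $w$, whereas you isolate the shift behaviour in a single reduction lemma. Your computation of the shift amount $\sigma(u)$ and your argument that the discarded positions $2,\ldots,\sigma$ all carry $b$ and hence can never conceal the critical $a$ at position $|w|-k$ are both correct.

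However, there is an off-by-one error at the window boundary, and it is a genuine gap rather than a cosmetic one. Since $S_k$ has $m=k+3$ layers, Definition~\ref{def ext delta} triggers a shift only when $|w|\geq m=k+3$; for $|w|\leq m-1=k+2$ it sets $\Delta((1,1),w)=\delta((1,1),w)$. Your shift-invariance is claimed for all $|w|\geq k+2$ and justified as ``immediate from Definition~\ref{def ext delta}'', but at $|w|=k+2$ the claimed equality $\Delta((1,1),w)=\Delta((1,1),w[\sigma+1,|w|])$ is false: the left-hand side equals $\delta((1,1),w)=(\sigma,k+3)$, a layer-$(k+3)$ state, while the right-hand side lies in layer $k+3-\sigma$. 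Dually, your base case stops at $|w|\leq k+1$, so words of length exactly $k+2$ --- precisely those whose run ends in layer $k+3$ without shifting --- are treated by neither part of your proof. The tell-tale symptom is that you never use the finality of $(1,k+3)$, although it is essential: if $(1,k+3)$ were removed from $F_k$, every word of $L_k$ of length exactly $k+2$ would be rejected and the proposition would become false, yet your argument would go through unchanged and still ``prove'' it. The repair is routine: take the base case to be $|w|\leq k+2$ (for $|w|=k+2$ the reached state is $(\sigma(w[1,k+2]),k+3)$, which is final iff $\sigma=1$ iff $w[2]=a$ iff $w\in L_k$, and this is exactly where $(1,k+3)\in F_k$ is needed), and run the shift step only for $|w|\geq k+3$, where both your state equality and your membership equivalence do hold.
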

\begin{proof}
  Let $w$ be a word in $\Sigma^n$. Let us denote $\beta=\Sigma\setminus \{\sigma\}$. We show by recurrence on $n$ that  $w\in L(S_k)$ $\Leftrightarrow$ $w\in L_k$.
  
  \textbf{(I)} 
  \begin{itemize} 
  \item Let $w\in \Sigma ^{<k+1}$. By definition, $S_k$ has $k+3$ levels. Final states of $S_k$ are on levels $k+2$ and $k+3$. Thus, $w$ is not recognized by $S_k$. As $L_k=\Sigma^*\{\sigma\}\Sigma^k$, the minimal length of any recognized word is $k+1$. Thus, when $|w|<k+1$, $w\in L(S_k)\Leftrightarrow w\in L_k$. 
 \item Let $w\in \Sigma^{k+1}$. If $w\in L_k$, necessarily $w\in \{\sigma\}\Sigma^k$. By construction, $\delta(1_1,w)\in Q_{k+2}=Q_{m-1}\subset F$. Thus $w\in L(S_k)$. Conversely, if $w\in L(S_k)$, $\delta(1_1,w)\in Q_{m-1}\subset F$. Necessarily, $w[1,1]=\sigma$ and then $w\in L_k$. 
\item Let $w\in \Sigma^{k+2}$.  If $w\in L_k$, necessarily $w\in \Sigma\{\sigma\}\Sigma^k$. By $\mathbf{(1)}$, $\delta(1_1,w[1,1])\in \{1_2,1^{\bullet}_2\}$  and by $\mathbf{(2)}$, as $w[2,2]=\sigma$, $\delta(1_2,w[2,2])=\{1_3\}$ and $\delta(1^{\bullet}_2,w[2,2])=\{1^{\bullet}_3\}$. Moreover, by $\mathbf{(4)}$, $\delta(1_j,w[j-1,j])=1_{j+1}$ and $\delta(1^{\bullet}_j,w[j-1,j])=1^{\bullet}_{j+1}$ for $j\in [3,m-2]$. To conclude, by $\mathbf{(5)}$, $\delta(1_{m-1},w[k+2,k+2])=1_m\in F$ and $\delta(1^{\bullet}_{m-1},w[k+2,k+2])=1_m\in F$, so $w\in L(S_k)$. Conversely, if $w\in L(S_k)$, $\delta(1_1,w)=1_m$. By $\mathbf{(4)}$ there exists $p\in \{1_3, 1^{\bullet}_3\}$ such that $\delta(p,w[3,k+2])=1_m$. By $\mathbf{(2)}$, there exists $q\in \{1_2, 1^{\bullet}_2\}$ such that $\delta(q,w[2,2])=p$ which implies $w[2,2]=\sigma$. As $\{1_2, 1^{\bullet}_2\}$ is directly accessible from the initial state, $w\in L_k$.
  \end{itemize}
  \textbf{(II)} Let us suppose that $n> k+2$. We show by recurrence on $n$ that: 
  \begin{itemize}
    \item $\Delta(1_1,w)=1_2$ $\Rightarrow$ $w[n-k,n]\in \beta^{k}\sigma$\hfill (a)
    \item $\Delta(1_1,w)=1^{\bullet}_2$ $\Rightarrow$ $w[n-k,n]\in \beta^{k+1}$\hfill (b)
    \item $\forall 3\leq j\leq k+2$, $\forall i\leq j-2$, 
      \begin{itemize}
        \item $\Delta(1_1,w)=i_j$ $\Rightarrow$ $w[n-k,n-j+i+2]\in \beta^{k-j+2}\sigma\beta^{i-1}\sigma$ \hfill (c)
        \item $\Delta(1_1,w)=(j-1)_j$ $\Rightarrow$ $w[n-k,n]\in\beta^{k-j+2}\sigma \beta^{j-2}$\hfill (d)
        \item $\Delta(1_1,w)=i^{\bullet}_j$ $\Rightarrow$ $w[n-k,n-j+i+2]\in \beta^{k-j+i+2}\sigma$\hfill (e)
        \item $\Delta(1_1,w)=(j-1)^{\bullet}_j$ $\Rightarrow$ $w[n-k,n]\in \beta^{k+1}$\hfill (f)
      \end{itemize}
    \item $\forall i\leq k+1$, $\Delta(1_1,w)=i_m$ $\Rightarrow$ $w[n-k,n-k+i-1]\in\beta^{i-1}\sigma$\hfill (g)  
    \item $\Delta(1_1,w)=(m-1)_m$ $\Rightarrow$ $w[n-k,n]\in \beta^{k+1}$\hfill (h)
  \end{itemize}
  
  
  Let us suppose that the recurrence  holds for any integer $1\leq n'<n$. Let $w=w'a$ with $a$ in $\Sigma$ and $|w'|=n-1$. Let us suppose that:
  
  \begin{itemize}
    \item[\textbf{(a)}] $\Delta(1_1,w')= 1_2$.  Then  $w'[n-k-1,n-1]\in \beta^{k}\sigma$. By definition, $\Delta(1_1,w)=\delta(\Delta(1_1,w'),a)$.
    If $a=\sigma$, $\Delta(1_1,w)=1_3$ and $w[n-k,n]\in \beta^{k-1}\sigma\sigma$ (case (c) with $i=1$ and $j=3$).
    If $a\in \beta$, $\Delta(1_1,w)=2_3$ and $w[n-k,n]\in \beta^{k-1}\sigma \beta$ (case (d) with $j=3$).
    %
    \item[\textbf{(b)}] $\Delta(i,w')=1^\bullet_2$. By \modif{}{the} recurrence hypothesis, $w'[n-k-1,n-1]\in \beta^{k+1}$. 
    If $a=\sigma$, $\Delta(1_1,w)=1^\bullet _3$ and $w[n-k,n]\in \beta^{k}\sigma$ (case (e) with $i=1$ and $j=3$).
    If $a\in \beta$, $\Delta(1_1,w)=2^\bullet _3$ and $w[n-k,n]\in \beta^{k+1}$ (case (f) with $j=3$).
    %
    \item[\textbf{(c)}] $\Delta(1_1,w')=i_j$ with $3\leq j\leq k+2$ and $i\leq j-2$. If $j\neq k+2$, then by \modif{}{the} recurrence hypothesis, $w'[n-k-1,n-j+i+1]\in \beta^{k-j+2}\sigma \beta^{i-1}\sigma$.  By definition of $\delta$ (4), 
     $\Delta(1_1,w)=i_{j+1}$  and $w[n-k,n-j+i+2]\in \beta^{k-j+1}\sigma \beta^{i-1}\sigma a$. 
    By substituting $j'=j+1$, $\Delta(1_1,w)=i_{j'}$ and $w[n-k,n-j'+i+2]\in \beta^{k-j'+2}\sigma \beta^{i-1}\sigma $ (case (c)).
    If $j=k+2$, then by \modif{}{the} recurrence hypothesis, $w'[n-k-1,n-k+i-1]\in \sigma \beta^{i-1}\sigma$. Hence $\Delta(1_1,w)=i_m$ and $w[n-k,n-k+i-1]\in \beta ^{i-1}\sigma$ (case (g)).
    %
    \item[\textbf{(d)}] $\Delta(1_1,w')=(j-1)_j$ with $3\leq j\leq k+2$.  Then by \modif{}{the} recurrence hypothesis, $w'[n-k-1,n-1]\in \beta^{k-j+2}\sigma \beta^{j-2}$. 
    If $a=\sigma$, $\Delta(1_1,w)=(j-1)_{j+1}$ and $w[n-k,n]\in \beta^{k-j+1}\sigma \beta^{j-2}\sigma$. By substituting $j'=j-1$ and $j''=j+1$, $\Delta(i,w)=j'_{j''}$ and $w[n-k,n]=b^{k-j''+2}\sigma \beta^{j'-1}\sigma$ (case (c)).
    If $a\in \beta$, $\Delta(1_1,w)=j_{j+1}$ and $w[n-k,n]\in \beta^{k-j+1}\sigma \beta^{j-1}$. Then, $\Delta(1_1,w)=(j''-1)_{j''}$ and $w[n-k,n]\in \beta^{k-j''+2}\sigma \beta^{j''-2}$ (case (d)).   
    \item[\textbf{(e)}]  $\Delta(1_1,w')=i^\bullet_j$ with $3\leq j\leq k+2$ and $i\leq j-2$.  If $j\neq k+2$, then by \modif{}{the} recurrence hypothesis, $w'[n-k-1,n-j+i+1]\in \beta^{k-j+i+2}\sigma$. 
     By definition of $\delta$, $\Delta(1_1,w)=i^\bullet_{j+1}$ and $w[n-k,n]\in \beta^{k-j+i+1}\sigma a$. By substituting $j'=j+1$, $\Delta(1_1,w)=i^\bullet_{j'}$ and $w[n-k,n-j'+i+2]\in \beta^{k-j'+i+2}\sigma$ (case (e)).
    If $j=k+2$, then by \modif{}{the} recurrence hypothesis, $w'[n-k-1,n-k+i-1]\in \beta^{i}\sigma$. Hence $\Delta(1_1,w)=i_m$ and $w[n-k,n-k+i-1]\in \beta^{i-1}\sigma$ (case (g)).
     %
    \item[\textbf{(f)}]  $\Delta(1_1,w')=(j-1)^\bullet_j$ with $3\leq j\leq k+2$.  Then by \modif{}{the} recurrence hypothesis, $w'[n-k-1,n-1]\in \beta^{k+1}$. 
    If $a=\sigma$, $\Delta(1_1,w)=(j-1)^\bullet_{j+1}$ and $w[n-k,n]\in \beta^{k}\sigma$. By substituting $j'=j-1$ and $j''=j+1$, $\Delta(1_1,w)=j'^\bullet_{j''}$ and $w[n-k,n]\in \beta^{k-j''+j'+2}\sigma$ (case (e)).
    If $a\in \beta$, $\Delta(1_1,w)=j^\bullet_{j+1}$ and $w[n-k,n]\in \beta^{k+1}$ (case (f)).
    %
    %
    \item[\textbf{(g)}]  $\Delta(1_1,w')=i_m$ with $i\leq m-1$. 
    Then by \modif{}{the} recurrence hypothesis, $w'[n-k-1,n-k+i-2]\in \beta^{i-1}\sigma$. 
    Consequently, $w'[n-k-2,n-1] \in \Sigma \beta^{i-1} \sigma \Sigma^{k+1-i}$, $w[n-2-k,n] \in \Sigma \beta^{i-1} \sigma \Sigma^{k+2-i}$ and $w[n-2-k+i,n] \in \sigma \Sigma^{k+2-i}$.
    By definition of $\gamma$, $\gamma(i_m)=(i,1_1)$.
    Furthermore, $\Delta(1_1,w')=\delta(1_1,w'[n-2-k,n-1])$.
    Hence $\Delta(1_1,w)=\delta(1_1,w[n-2-k+i,n])$. So $w[n-k-2+i,n]\in \sigma\Sigma^{k+2-i}$ and the cases (c), (d) and (g) have to be considered.
    %
%
%
    Hence \modif{}{the} recurrence holds. 
    
    \item[\textbf{(h)}] $\Delta(1_1,w)=(m-1)_m$.  Then by \modif{}{the} recurrence hypothesis, $w'[n-k-1,n-1]=\beta^{k+1}$. 
    If $a=\sigma$, $\Delta(1_1,w)=1_2$ and $w[n-k,n]\in \beta^{k}\sigma$ (case (a)). 
    If $a\in \beta$, $\Delta(1_1,w)=1^\bullet_2$ and $w[n-k,n]\in \beta^{k+1}$ (case (b)). 
    Hence \modif{}{the} recurrence holds. 
  \end{itemize}
  
  
  \textbf{(III)} Finally, since $q\in F \Leftrightarrow q=i_j \wedge (j=m-1 \vee (i=1\wedge j=m))$, and since for any $w$, $\Delta(1_1,w)$ is defined, it holds that $w\in L(S_k)$ $\Leftrightarrow$ $w\in L_k$.
  
\end{proof}

\modif{}{As a direct consequence of Proposition~\ref{prop Sk bon lang}, it holds:}
\begin{theorem}\label{thm qds expo smaller}
  There exists a QDS $S_k$ that recognizes $L_k$ which is exponentially smaller than the minimal DFA associated with $L_k$.
%
\end{theorem}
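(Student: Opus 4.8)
The plan is to reduce the theorem to a pure counting comparison, since all the hard recognition work has already been done. Indeed, Proposition~\ref{prop Sk bon lang} establishes $L(S_k)=L_k$, so I do not need any new language argument: it suffices to show that the number of states of $S_k$ is polynomial in $k$ while the minimal DFA of $L_k$ has exponentially many states.

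First I would count the states of $S_k$ directly from its defining data. The level $Q_{1,k}$ contributes a single state. Each level $Q_{j,k}$ for $j\in\{2,\ldots,k+2\}$ contributes the primed and unprimed copies $(n,j)$ and $(n,j)'$ for $1\le n\le k+1$, i.e. $2(k+1)$ states; as there are $k+1$ such levels, together they contribute $(k+1)\cdot 2(k+1)=2(k+1)^2$ states. Finally $Q_{k+3,k}$ contributes the $k+2$ states $(n,k+3)$ with $1\le n\le k+2$. Hence the total number of states of $S_k$ is
\[
1 + 2(k+1)^2 + (k+2) = 2(k+1)^2 + k + 3,
\]
which is $O(k^2)$, and in particular polynomial in $k$. (Even if some of these states turn out to be unreachable, discarding them only lowers the count, so the polynomial bound is safe.)

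Next I would invoke the size of the minimal DFA. The fact that $L_k$ needs a $2^{k+1}$-state minimal DFA is stated as known at the opening of the section, so I would simply cite it; but for a self-contained argument I would give the short Myhill--Nerode separation: a word $w$ lies in $L_k=\{a,b\}^*\{a\}\{a,b\}^k$ iff its $(k+1)$-th letter from the right is $a$, so the $2^{k+1}$ words of $\{a,b\}^{k+1}$ are pairwise inequivalent. Indeed, if $s\neq s'$ differ at position $i$ (counted from the left, $1\le i\le k+1$), then appending any $z$ with $|z|=i-1$ makes the distinguished letter of $sz$ (respectively $s'z$) equal to $s[i]$ (respectively $s'[i]$), so exactly one of $sz,s'z$ belongs to $L_k$. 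This yields $2^{k+1}$ Myhill--Nerode classes and hence at least $2^{k+1}$ states in the minimal DFA.

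Finally, comparing the two counts gives the ratio $2^{k+1}/(2(k+1)^2+k+3)$, which grows without bound (exponentially) in $k$, establishing the claimed separation. I do not expect any genuine obstacle here: the only place demanding care is the bookkeeping of the state count — correctly reading the index range $\{2,\ldots,k+2\}$ and not conflating the primed and unprimed families — together with deciding whether to cite or reprove the $2^{k+1}$ DFA lower bound, which is the only nontrivial ingredient and is best handled by the Myhill--Nerode argument above.
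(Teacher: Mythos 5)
Your proposal is correct and takes essentially the same route as the paper, whose entire proof is to invoke Proposition~\ref{prop Sk bon lang} together with the section's opening remark that the minimal DFA of $L_k$ has $2^{k+1}$ states. You simply make explicit what the paper leaves implicit: the $O(k^2)$ state count of $S_k$ (your arithmetic $1+2(k+1)^2+(k+2)$ is right) and a self-contained Myhill--Nerode proof of the $2^{k+1}$ lower bound, both of which are sound.
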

\modif{\begin{proof}
  Corollary of Proposition~\ref{prop Sk bon lang}.
  
\end{proof}}{}

\section{Conclusion and Perspectives}

    Quasi-deterministic structures are an alternative to the computation of a deterministic automaton since they can be used as recognizers, while reducing the space needed to solve the membership problem once computed.
We have  shown that reduced QDSs can be exponentially smaller than equivalent deterministic automata. 


 
 
A regular language is $(k,l)$-unambiguous if it is denoted by some regular expression the position automaton of which is $(k,l)$-unambiguous. 
Similar extensions were already defined for deterministic automata ($1$-unambiguity). 
Denoting a language by such an expression allows us to directly compute a quasi-deterministic structure in order to solve the membership problem.  
  
One may wonder whether every regular language admits a $(k,l)$-unam\-biguous position automaton recognizing it.
If the answer is negative, then a second question arises: Is it possible to characterize languages having a $(k,l)$-unam\-biguous position automaton, as Br\"uggemann-Klein and Wood did for languages having a deterministic position automaton ?


\section*{References}
\bibliography{/Users/pacot/TEX/BIBLIO/biblio}

\end{document}